\documentclass[draftcls,onecolumn]{IEEEtran}%one column

% correct bad hyphenation here
\hyphenation{op-tical net-works semi-conduc-tor}

\usepackage{amsmath,amsfonts}
\usepackage{algorithmicx}
\usepackage{algorithm}
\usepackage{array}
\usepackage{subfigure}
\usepackage{textcomp}
\usepackage{stfloats}
\usepackage{url}
\usepackage{verbatim}
\usepackage{graphicx}
\usepackage{cite}
\usepackage{pdfsync}
\usepackage{amsthm}
\usepackage{times,amssymb,float,nicefrac,mathrsfs,float}
\usepackage{bm}
\usepackage{bbm}
\usepackage{algpseudocode}
\usepackage{color}
\usepackage{booktabs}
\usepackage{epstopdf}

% new enviorments

\newtheorem{theorem}{Theorem}
\newtheorem{corollary}{Corollary}
\newtheorem{definition}{Definition}
\newtheorem{proposition}{Proposition}
\newtheorem{example}{Example}
\newtheorem{remark}{Remark}
\newtheorem{lemma}{Lemma}

\begin{document}

\title{Derivative Descendants of Cyclic Codes and Derivative Decoding
}

\author{Qin Huang$^*$,~\IEEEmembership{Senior Member,~IEEE,}
\and 
Bin Zhang

}
\maketitle

{\renewcommand{\thefootnote}{}\footnotetext{

\vspace{-.2in}
 
\noindent\rule{1.5in}{.4pt}

Part of this article was presented at GlobeCom 2022.
This work was supported by the National Natural Science Foundation
of China under Grant 62071026.
(Qin Huang and Bin Zhang contributed equally to this work.)
(Corresponding author: Qin Huang.)
}

\begin{abstract}
This paper defines cyclic and minimal \emph{derivative descendants} (DDs) of an extended cyclic code from the derivative of the Mattson-Solomon polynomials, respectively. First, it demonstrates that the cyclic DDs are the same extended cyclic code. It allows us to perform soft-decision decoding for extended cyclic codes based on their cyclic DDs. Then, it proves that the minimal DDs are equivalent codes. It also allows us to perform soft-decision decoding based on the minimal DDs with permutations. Simulation results show that our proposed derivative decoding can be close to the maximum likelihood decoding for certain extended cyclic codes, including some extended BCH codes. 
\end{abstract}

\begin{IEEEkeywords}
cyclic codes, Mattson-Solomon polynomial, soft-decision, derivative decoding
\end{IEEEkeywords}

\section{Introduction}
Cyclic codes, first studied in 1957 \cite{prange1957cyclic}, form a large class of error-control codes which include many well-known codes, 
e.g.,  \emph{Bose-Chaudhuri-Hocquenghem} (BCH) codes, \emph{Reed-Solomon} codes, finite geometry codes, punctured \emph{Reed-Muller} (RM) codes etc. \cite{macwilliams1977theory,lin2001error,blahut2003algebraic,berlekamp2015algebraic}. 
Due to the cyclic structure, their encoding and hard-decision decoding can be implemented efficiently. Moreover, their inherent algebraic structure and soft-decision decoding \cite{vardy1994maximum,kamiya2001algebraic,bossert2022hard,tapp1999extended,lin2020scheme,kou2001low,sidel1992decoding,dumer2006soft,ye2020recursive,lian2020decoding} have always attracted a lot of attention.

This paper starts from the derivative of Mattson-Solomon (MS) polynomials \cite{macwilliams1977theory,mattson1961new}. We define two types of derivative descendants (DDs) of an extended cyclic code $\mathcal{C}$, cyclic DDs and minimal DDs, respectively. The first type is defined as the smallest extended cyclic codes containing the derivatives of all the codewords in $\mathcal{C}$. The second type is defined as the smallest subspaces consisting of the derivatives of all codewords in $\mathcal{C}$.

First, we demonstrate that the cyclic DDs of an extended binary cyclic code in different directions result in the same extended cyclic code. It can be specified by analyzing the exponent set of MS polynomials. Based on their cyclic DDs, we propose a soft-decision derivative decoding algorithm for extended binary cyclic codes. It consists of three steps: calculating log-likelihood ratios (LLRs) of cyclic DDs, decoding cyclic DDs and voting for decision. Simulation results show that the performance of the proposed derivative decoding is close to that of the maximum likelihood decoding (MLD) for cyclic codes, e.g., $(64,45)$ and $(64,24)$ extended BCH (eBCH) codes. Besides we conversely introduce cyclic derivative ascendant (DA) of an extended cyclic code $\mathcal{C}$ as well as their decoding. 

Then, we prove that the minimal DDs of an extended binary cyclic code in different directions are equivalent. Moreover, it reveals that the cyclic shift of a codeword in a minimal DD is a codeword in another minimal DD. As a result, the derivative decoding can be carried out with only one decoder for the minimal DD in one direction and cyclic shifting. Due to the small dimension of minimal DDs, it is attractive to perform derivative decoding based on ordered statistics decoding (OSD) \cite{fossorier1995soft}. Simulation results show that the derivative decoding based on the OSD with order-1 can outperform the higher order OSD.

The rest of the paper is organized as follows. 
Section \ref{sec:preli} gives a brief review of cyclic codes and MS polynomials.
In Section \ref{sec:code construction}, we define the cyclic DDs and cyclic DAs of extended cyclic codes. Section \ref{sec:decoding algorithm} presents the derivative decoding algorithm. In Section \ref{sec:MDD}, we define the minimal DDs and present the derivative decoding based on decodings of minimal DDs. Section \ref{sec:conclusion} concludes this paper.

\section{Cyclic Codes and Their Decomposition}\label{sec:preli}

\subsection{Cyclic codes and Mattson-Solomon polynomials}

A linear code $\mathcal{C}$ of length $n$ is \emph{cyclic} if a cyclic shift of any codeword is also a codeword, i.e. whenever $\bm{a}=[a_i, i\in [n]]$ is in $\mathcal{C}$ then so is $[a_{i+1}, i\in [n]]$.
Here, $[n]\triangleq \{0, 1, 2, ..., n-1\}$ and subscripts are reduced modulo $n$.

Let $m$ be a positive integer. A binary cyclic code $\mathcal{C}$ of length $n=2^m-1$ and dimension $0<k\leq n$ is an ideal in the ring $\mathbb{F}_2[x]/(x^n-1)$, which is generated by a generator polynomial $g(x)$ with degree $n-k$ such that $g(x)$ divides $x^n-1$. 

Let $\alpha$ denote a primitive element of $\mathbb{F}_{2^m}$.
For a codeword $\bm{a}=[a_0, a_1, ..., a_{n-1}]$ corresponding to a code polynomial $a(x)=\sum_{i=0}^{n-1}a_{i}x^i$, the associated Mattson-Solomon polynomial is defined over $\mathbb{F}_{2^m}$ as follows
\begin{equation*}
A(z) \triangleq \sum_{j=0}^{n-1}A_{j}z^{j},
\end{equation*}
where 
\begin{equation*}
A_{j} = a(\alpha^{-j}) = \sum_{i=0}^{n-1}a_i \alpha^{-ij}.
\end{equation*}
\IEEEpubidadjcol
The codeword $\bm{a}$ can be recovered from $A(z)$ by 
\begin{equation*}
\bm{a} = [a_{i}, i\in [n]] = [A(\alpha^i), i \in [n]].
\end{equation*}
The coefficient $A_{j}$ is fixed to 0 if and only if $\alpha^{-j}$ is a zero of $g(x)$. 
Moreover, the MS polynomial of the cyclic shift of $\bm{a}$ is $A(\alpha z)$.

We define the \emph{exponent set} of all the MS polynomials associated with $\mathcal{C}$ as follows
\begin{equation}\label{eq:exponent set}
S_\mathcal{C} \triangleq \{j\in [n]: g(\alpha^{-j})\neq 0\}.
\end{equation}
For brevity, we call it the exponent set of $\mathcal{C}$. Please note that its size is the same as the dimension $k$. We can express $\mathcal{C}$ as
\begin{equation*}
\mathcal{C} = \{[A(\alpha^i), i\in [n]]: A(z) = \sum_{j \in S_{\mathcal{C}}}A_{j}z^j\},
\end{equation*}
where $A_j \in \mathbb{F}_{2^m}$. The \emph{conjugacy constraint} \cite[Ch. 6]{blahut2003algebraic}, i.e. $A_{2j}=A_j^2$ is required to keep $[A(\alpha^i), i\in [n]]$ binary.

The cyclic code $\mathcal{C}$ can be extended by adding an overall parity-check bit to each codeword.
The overall parity-check bit of a codeword $\bm{a}$ is the evaluation of the corresponding MS polynomial at $0$, i.e., $A(0)$ \cite[Ch. 8]{macwilliams1977theory}.
Therefore, the extended cyclic code of $\mathcal{C}$ can be also identified by $S_{\mathcal{C}}$.
We denote $0$ in $\mathbb{F}_{2^m}$ by $\alpha^{\infty}$ and define $I \triangleq \{\infty\} \cup [n]$.
The extended cyclic code $\mathcal{C}$ with exponent set $S_{\mathcal{C}}$ can be expressed as
\begin{equation*}
\mathcal{C} = \{[A(\alpha^i), i\in I]: A(z) = \sum_{j \in S_{\mathcal{C}}}A_{j}z^j\}.
\end{equation*}
Please note that $\alpha^\infty \alpha = \alpha^\infty$.
Thus we make the agreement $\infty + 1=\infty\text{ mod }n$.
Then $\mathcal{C}$ is an extended cyclic code if whenever $\bm{a}=[a_i, i\in I]$ is in $\mathcal{C}$ then its cyclic shift $[a_{i+1}, i\in I]$ is also in $\mathcal{C}$.  
In the following, we mainly focus on extended cyclic codes, and may use $\bm{a}$ and $A(z)$ to denote the codeword interchangeably.

\subsection{Decomposing cyclic codes as a direct sum of minimal cyclic codes}

For an integer $s \in [n]$, the cyclotomic coset modulo $n$ containing $s$ is $C_s \triangleq \{s, 2s, 2^2s, ..., 2^{m_s-1}s\}$, where $m_s$ is the smallest positive integer such that $2^{m_s}s = s$ \text{mod} $n$.
The smallest entry of $C_s$ is called the \emph{coset representative}. 
For a subset $S$ of $[n]$, we denote the smallest and the largest element of $S$ by $\texttt{min}(S)$ and $\texttt{max}(S)$, respectively.
We denote the union of all the cyclotomic cosets which have nonempty intersections with $S$ as $\texttt{cc}(S) \triangleq \bigcup_{s\in S}C_s$. 
And we denote the set consisting of all the coset representatives in $\texttt{cc}(S)$ as $\texttt{cr}(S) \triangleq \bigcup_{s\in S}\{\texttt{min}(C_s)\}$.

The extended \emph{minimal cyclic code} associated with the cyclotomic coset $C_s$ is
\begin{equation*}
\begin{aligned}
\mathcal{M}_s = \{[A(\alpha^i), i\in I] : A(z)=T_{m_s}(A_sz^s) \\ \text{ for all } A_{s}\in \mathbb{F}_{2^{m_s}} \},
\end{aligned}
\end{equation*} 
where $T_{m_s}(z)$ is the \emph{trace function}
\begin{equation*}
T_{m_s}(z) \triangleq \sum_{j\in[m_s]}z^{2^j},
\end{equation*}
and $\mathbb{F}_{2^{m_s}}$ is a subfield of $\mathbb{F}_{2^m}$.
It is clear that the exponent set of $\mathcal{M}_s$ is $C_s$.

An extended cyclic code with exponent set $S_{\mathcal{C}}$ can be expressed as a direct sum of the extended minimal cyclic codes, i.e.,
\begin{equation*}
\begin{aligned}
\mathcal{C} & = \bigoplus_{s\in \texttt{cr}(S_{\mathcal{C}})} \mathcal{M}_s \\
  & = \{[A(\alpha^i), i\in I] : A(z)=\sum_{s\in \texttt{cr}(S_{\mathcal{C}})}T_{m_s}(A_sz^s) \\
 & \text{ for all } A_s\in \mathbb{F}_{2^{m_s}} \},
\end{aligned}
\end{equation*}
where $\bigoplus$ is the direct sum operator. 
We call the set $\texttt{cr}(S_{\mathcal{C}})$ as the \emph{representative set} of $S_{\mathcal{C}}$. We end this section with the following example.
\begin{example}\label{ex:ex_1}
Let $\alpha$ denote a primitve element in $\mathbb{F}_{2^4}$.
Consider the $(16,7)$ extended cyclic code $\mathcal{C}$ associated with the generator polynomial $g(x)=1+x^4+x^6+x^7+x^8$.
The zeros of $g(x)$ are $\alpha^1, \alpha^2, \alpha^4, \alpha^8, \alpha^3, \alpha^6, \alpha^{12}, \alpha^9$.
From (\ref{eq:exponent set}), the exponent set of $\mathcal{C}$ is $S_{\mathcal{C}}=\{0, 1, 2, 4, 8, 5, 10\}$.
Then $\mathcal{C}$ can be identified by the set 
$$\{[A(\alpha^i), i\in I]: A(z) = \sum_{j \in S^{\mathcal{C}}}A_{j}z^j\},$$
where $A_j\in\mathbb{F}_{2^4}$ and satisfies $A_{2j}=A_j^2$.
The representative set of $S_\mathcal{C}$ is $\{0, 1, 5\}$.
Note that $m_0=1$, $m_1=4$, $m_5=2$. 
Then $\mathcal{C}$ can be expressed as
\begin{equation*}
\begin{aligned}
\mathcal{C}=\{[A(\alpha^i), i\in I]:A_0 + T_4(A_1z) + T_2(A_5z^5)\},
\end{aligned}
\end{equation*}
where $A_0\in \mathbb{F}_2$, $A_1 \in \mathbb{F}_{2^4}$, $A_5 \in \mathbb{F}_{2^2}$.
\end{example}

\section{Cyclic Derivative Descendants and Ascendants}\label{sec:code construction}

This section introduces cyclic DDs and cyclic DAs of extended cyclic codes. Their dimensions and distances are also investigated.

\subsection{Cyclic derivative descendants}
Let $\beta$ be a power of $\alpha$. 
Consider a codeword $\bm{a}$ and its MS polynomial $A(z)$.
The derivative of $A(z)$ in the direction $\beta$ is defined as 
\begin{equation}\label{eq:derivative}
\Delta_{\beta} A(z) \triangleq A(z+\beta) - A(z).
\end{equation}
With the above definition, we define the cyclic DDs of extended cyclic codes.

\begin{definition}\label{def:cyclic derivative descendant}
For an extended cyclic code $\mathcal{C}$, its \emph{cyclic derivative descendant} in the direction $\beta$ denoted by  $\mathcal{D}(\mathcal{C}, \beta)$ is the extended cyclic code with the smallest dimension which contains
\begin{equation*}
\{[\Delta_{\beta}A(\alpha^i), i \in I]: A(z)\in \mathcal{C} \}.
\end{equation*}
\end{definition}
In the following, we may use $\Delta_{\beta}A(z)$ to denote the vector $[\Delta_{\beta}A(\alpha^i), i \in I]$ if the context is clear.
For simplicity, we may call the vector the derivative of $\bm{a}$.

In fact, the cyclic DDs in all the directions are the same.
To prove this, we start with the extended minimal cyclic codes. 
For an integer $s \in [n]$, we denote its binary expansion by $\overline{s}=[s_0, s_1, ..., s_{m-1}]$ such that $s = \sum_{j = 0}^{m-1}s_j2^{j}$.
We define the support set of the binary expansion of $s$ as $W_s \triangleq \{j \in [m]: s_j \neq 0\}$.
And we say the binary expansion of $s'$ is properly covered by that of $s$ if $W_{s'}\subsetneqq W_s$.
Let $P(s)$ denote the set consisting of all the nonnegetive intergers whose binary expansion is properly covered by that of $s$, i.e.
\begin{equation}\label{eq:Ds}
P(s) \triangleq \{\sum_{j\in V}2^j: V\subsetneqq W_s\}.
\end{equation}
The exponent set of the cyclic DDs of an extended minimal cyclic code is given by the following lemma.
\begin{figure*}[htbp]
\centering
\includegraphics[width=0.8\textwidth]{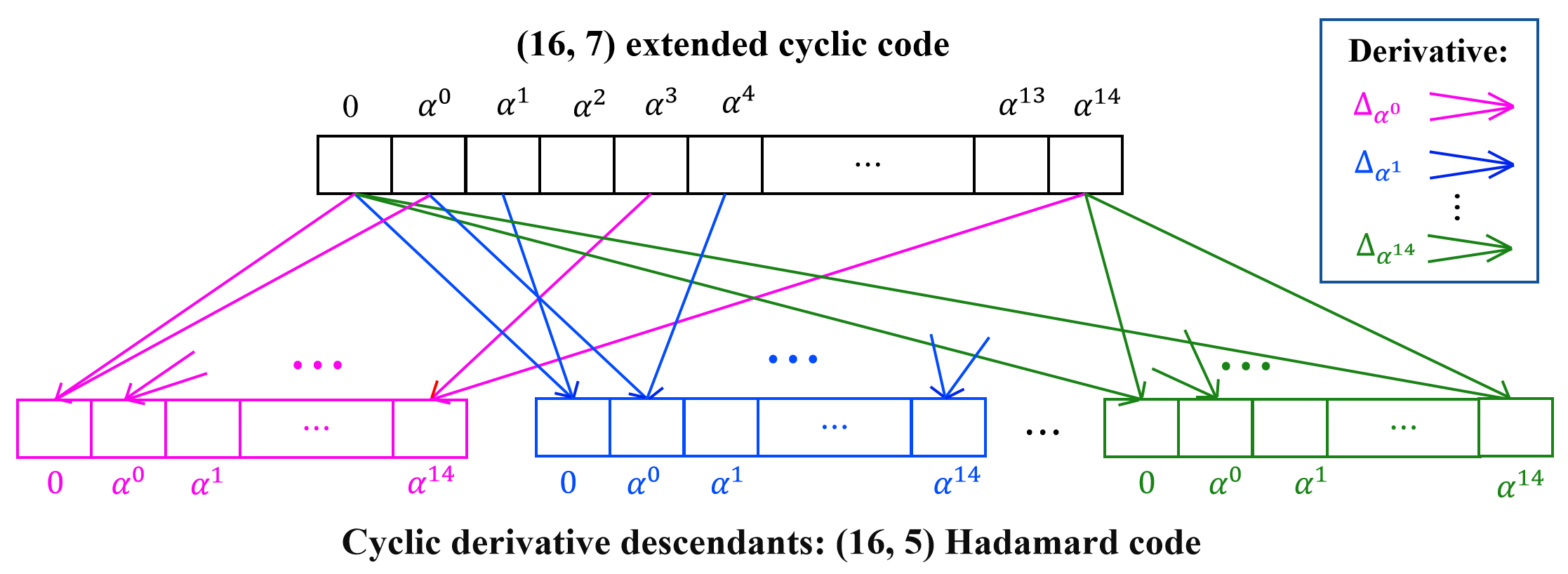}
\caption{The cyclic DD of the $(16, 7)$ extended cyclic code is a $(16, 5)$ extended cyclic code.}
\label{fig:descendantExample}
\end{figure*}

\begin{lemma}\label{lemma: 1}
Consider the extended minimal cyclic code $\mathcal{M}_s$. 
The exponent set of $\mathcal{D}(\mathcal{M}_s, \beta)$ for any $\beta$ is $\texttt{cc}\Big(P(s)\Big)$.
\end{lemma}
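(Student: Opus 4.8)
The plan is to compute the exponent set of $\mathcal{D}(\mathcal{M}_s,\beta)$ directly from the derivative of the defining polynomial. A generic codeword of $\mathcal{M}_s$ has MS polynomial $A(z)=T_{m_s}(A_sz^s)=\sum_{j\in[m_s]}A_s^{2^j}z^{s2^j}$, so it suffices to understand $\Delta_\beta(z^{s2^j})$ for each $j$, and since the cyclotomic coset structure is already built into $T_{m_s}$, I expect it is enough to analyze $\Delta_\beta(z^s)$ and then close up under the conjugacy/cyclotomic action. First I would expand $\Delta_\beta(z^s)=(z+\beta)^s-z^s$ using the fact that over $\mathbb{F}_{2^m}$ the binomial coefficients reduce mod $2$: by Lucas' theorem, $\binom{s}{t}\equiv 1\pmod 2$ precisely when $W_t\subseteq W_s$. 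Hence $(z+\beta)^s=\sum_{W_t\subseteq W_s}\beta^{\,s-t}z^t$, and subtracting $z^s$ (the $t=s$ term) leaves exactly $\Delta_\beta(z^s)=\sum_{W_t\subsetneq W_s}\beta^{\,s-t}z^t$, i.e. the monomials appearing are exactly those with exponent in $P(s)$ as defined in \eqref{eq:Ds}.

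The next step is to argue that the exponent set of the smallest extended cyclic code containing all these derivative vectors is $\texttt{cc}(P(s))$, for \emph{every} choice of direction $\beta$ (any nonzero power of $\alpha$; the case $\beta=\alpha^\infty=0$ gives the zero derivative and is trivial). For the containment $\subseteq$: each $\Delta_\beta A(z)$ is an $\mathbb{F}_{2^m}$-combination of monomials $z^t$ with $t\in \texttt{cc}(P(s))$ (applying the trace just permutes exponents within a coset), and any extended cyclic code containing a binary codeword must contain its whole cyclotomic-coset support, so its exponent set must include $\texttt{cc}(P(s))$; thus $\mathcal{D}(\mathcal{M}_s,\beta)$ has exponent set at least $\texttt{cc}(P(s))$. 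For the reverse inclusion I must show no exponent is missed, i.e. for each $t\in P(s)$ there is a choice of $A_s$ whose derivative actually has a nonzero $z^t$-coefficient, and moreover that by varying $A_s$ we sweep out all of $\mathcal{M}_t$ (so that the smallest extended cyclic code containing the derivatives is exactly $\bigoplus_{t\in\texttt{cr}(P(s))}\mathcal{M}_t$). Here the key point is that the $z^t$-coefficient of $\Delta_\beta A(z)$ is a sum over those $j\in[m_s]$ with $W_t\subseteq W_{s2^j}$ of terms $A_s^{2^j}\beta^{s2^j-t}$, which is (a conjugate of) a nonzero additive polynomial / linearized-type expression in $A_s$ unless it vanishes identically; I would check it is not identically zero by a degree or leading-term argument, using $\beta\neq 0$ and the distinctness of the relevant exponents.

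The main obstacle I anticipate is precisely this last independence argument: showing that as $A_s$ ranges over $\mathbb{F}_{2^{m_s}}$, the collection of derivative vectors is not confined to a proper extended cyclic subcode of $\bigoplus_{t\in\texttt{cr}(P(s))}\mathcal{M}_t$ — equivalently, that for each coset representative $t\in\texttt{cr}(P(s))$ the induced map $A_s\mapsto(\text{$z^t$-coefficient of }\Delta_\beta A(z))$ is onto the relevant subfield $\mathbb{F}_{2^{m_t}}$ (or at least nonzero, which suffices once one invokes cyclic closure and the conjugacy constraint). The cleanest route is probably to observe that $\Delta_\beta(z^s)=\sum_{W_t\subsetneq W_s}\beta^{s-t}z^t$ has \emph{every} coefficient nonzero since $\beta\ne0$, so already a single codeword (take $A_s=1$, using that $1\in\mathbb{F}_{2^{m_s}}$ and the trace of $z^s$) witnesses a nonzero $z^t$-term for every $t\in P(s)$; then the smallest extended cyclic code containing that one codeword already has exponent set $\texttt{cc}(P(s))$, giving $\supseteq$, and combined with the earlier $\subseteq$ we are done. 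I would close by noting the answer is manifestly independent of $\beta$, which is the statement of the lemma.
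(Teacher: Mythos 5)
Your overall skeleton matches the paper's: expand $(z+\beta)^s-z^s$ (your Lucas-theorem expansion is equivalent to the paper's factorization $\prod_{j\in W_s}(z^{2^j}+\beta^{2^j})-\prod_{j\in W_s}z^{2^j}$), conclude the exponent set is contained in $\texttt{cc}\big(P(s)\big)$, and then show every coset is actually attained. The gap is in the step you commit to for the reverse inclusion. It is true that $\Delta_\beta(z^s)=\sum_{t\in P(s)}\beta^{s-t}z^t$ has all coefficients nonzero, but the codeword of $\mathcal{M}_s$ is $T_{m_s}\big(A_s((z+\beta)^s-z^s)\big)$, and the trace folds several monomials of $A_s\Delta_\beta(z^s)$ onto the same power of $z$ whenever distinct pairs $(i,k)$ with $i\in[m_s]$, $k\in P(s)$ satisfy $k2^i\equiv t \pmod n$. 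Those folded terms can cancel for a \emph{fixed} $A_s$, so taking $A_s=1$ does not witness every exponent. Concretely, with $m=4$, $s=5$ (so $m_s=2$, $P(5)=\{0,1,4\}$) and $\beta=\alpha^3$: the coefficient of $z^0$ in $T_2\big((z+\beta)^5-z^5\big)$ is $\beta^5+\beta^{10}=1+1=0$, even though the coset $C_0$ lies in $\texttt{cc}\big(P(5)\big)$. So the single codeword with $A_s=1$ generates a strictly smaller extended cyclic code, and your claimed $\supseteq$ fails as argued.

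The repair is exactly the route you mentioned but declined to carry out, and it is what the paper does: for each $s'\in\texttt{cr}\big(P(s)\big)$, view the $z^{s'}$-coefficient $A'_{s'}=\sum_{i,k:\,k2^i\equiv s'}\beta^{(s-k)2^i}A_s^{2^i}$ as a polynomial in $A_s$. Since $k\mapsto k2^i \bmod n$ is injective, each power $A_s^{2^i}$ receives at most one contribution, so the polynomial is not identically zero and has degree at most $2^{m_s-1}$; hence it has at most $2^{m_s-1}<2^{m_s}=|\mathbb{F}_{2^{m_s}}|$ roots, and some choice of $A_s$ (possibly depending on $s'$) makes $A'_{s'}\neq 0$. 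Combined with cyclic closure and the conjugacy constraint, this forces $\mathcal{M}_{s'}\subseteq\mathcal{D}(\mathcal{M}_s,\beta)$ for every $s'\in\texttt{cr}\big(P(s)\big)$, completing the reverse inclusion. Your containment direction and the observation that the answer is independent of $\beta$ are fine as written.
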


\begin{proof}
For any $A(z) \in \mathcal{M}_s$, its derivative in direction $\beta$ is
\begin{equation*}
\begin{aligned}
\Delta_{\beta}A(z) & = T_{m_s}\Big(A_s(z+\beta)^s\Big) - T_{m_s}(A_sz^s) \\
& = T_{m_s}\Bigg(A_s\Big((z+\beta)^s-z^s\Big)\Bigg).
\end{aligned}
\end{equation*}
Note that
\begin{equation*}
\begin{aligned}
(z+\beta)^s -z^s  & = \prod_{j\in W_s}(z^{2^j}+\beta^{2^j}) - \prod_{j\in W_s}z^{2^j} \\
& = \sum_{V\subsetneqq W_s} ( z^{\sum_{j\in V}2^j} \beta^{\sum_{j\in W_s/V}2^j} ) \\
& = \sum_{k\in P(s)} z^k \beta^{s-k}.
\end{aligned}
\end{equation*}
Then 
\begin{equation*}
\begin{aligned}
\Delta_{\beta}A(z) & = T_{m_s}(A_s\sum_{k\in P(s)} z^k \beta^{s-k}).
\end{aligned}
\end{equation*}
From the above equation, we see that the exponents of $z$ must be a subset of $\texttt{cc}\Big(P(s)\Big)$.
Note that the coefficients of $\Delta_{\beta}A(z)$ must satisfy the conjugacy constraint, because $\Delta_{\beta}A(\alpha^i) = A(\alpha^i+\beta)-A(\alpha^i)$, and $A(\alpha^i)$ and $A(\alpha^i+\beta)$ are in $\mathbb{F}_2$ for all $i \in I$.
Therefore we can write  $\Delta_{\beta}A(z)$ in the form 
\begin{equation*}
\Delta_{\beta}A(z) = \sum_{s' \in \texttt{cr}\Big(P(s)\Big)}T_{m_{s'}}(A_{s'}'z^{s'}),
\end{equation*}
where
\begin{equation*}
A_{s'}' = 
\sum_{i\in[m_s],k\in P(s), \atop k2^i[\text{mod $n$}]=s'} \beta^{(s-k)2^i} A_s^{2^i}.
\end{equation*}
Treat $A_{s'}'$ as a function of $A_s$, i.e. $A_{s'}'(A_s)$.
Note that the degree of $A_{s'}'(A_s)$ is at most $2^{m_s-1}$ which implies $A_{s'}'(A_s)$ has at most $2^{m_s-1}$ roots.
Therefore, $A_{s'}'$ is not always zero.
As a result, the representative set of the exponent set of $\mathcal{D}(\mathcal{C}, \beta)$ is exactly $\texttt{cr}\Big(P(s)\Big)$ while the exponent set is $\texttt{cc}\Big(P(s)\Big)$.
\end{proof}

The above lemma shows that the cyclic DDs of an extended minimal cyclic code in different directions are the same.
Using the fact that the extended cyclic code $\mathcal{C}$ is a direct sum of extended minimal cyclic codes, we obtain the following theorem immediately.
\begin{theorem}\label{thm:DSN of DD}
For an extended cyclic code $\mathcal{C}$ with exponent set $S_{\mathcal{C}}$,
its cyclic DDs in different directions are the same code denoted by $\mathcal{D}(\mathcal{C})$,
whose exponent set $S_{\mathcal{D}}$ is $\bigcup_{s\in \texttt{cr}(S_{\mathcal{C}})}\texttt{cc}\Big(P(s)\Big)$ and the corresponding representative set is $\bigcup_{s\in \texttt{cr}(S_{\mathcal{C}})}\texttt{cr}\Big(P(s)\Big)$.
\end{theorem}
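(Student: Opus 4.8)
The plan is to reduce everything to Lemma \ref{lemma: 1} via the direct-sum decomposition $\mathcal{C}=\bigoplus_{s\in \texttt{cr}(S_{\mathcal{C}})}\mathcal{M}_s$ together with the linearity of the derivative operator. Fix a direction $\beta$. Every $A(z)\in\mathcal{C}$ can be written uniquely as $A(z)=\sum_{s\in \texttt{cr}(S_{\mathcal{C}})}A^{(s)}(z)$ with $A^{(s)}(z)\in\mathcal{M}_s$, and since $\Delta_{\beta}A(z)=A(z+\beta)-A(z)$ is additive in $A$, we have $\Delta_{\beta}A(z)=\sum_{s\in \texttt{cr}(S_{\mathcal{C}})}\Delta_{\beta}A^{(s)}(z)$.

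For the ``$\subseteq$'' direction I would argue as follows. By the computation in the proof of Lemma \ref{lemma: 1}, each $\Delta_{\beta}A^{(s)}(z)$ only involves monomials $z^k$ with $k\in\texttt{cc}\big(P(s)\big)$, so $\Delta_{\beta}A(z)$ only involves exponents in $S_{\mathcal{D}}\triangleq\bigcup_{s\in \texttt{cr}(S_{\mathcal{C}})}\texttt{cc}\big(P(s)\big)$. Moreover $\Delta_{\beta}A(z)$ takes binary values at all $\alpha^i$, $i\in I$, being the difference of two binary vectors, so its coefficients obey the conjugacy constraint. Hence $\Delta_{\beta}A(z)$ lies in the extended cyclic code $\mathcal{E}$ with exponent set $S_{\mathcal{D}}$; since $\mathcal{E}$ is extended cyclic, the smallest extended cyclic code containing all such derivatives, namely $\mathcal{D}(\mathcal{C},\beta)$, is contained in $\mathcal{E}$.

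For the reverse inclusion, note $\mathcal{M}_s\subseteq\mathcal{C}$ for every $s\in\texttt{cr}(S_{\mathcal{C}})$, so the set of derivatives of codewords of $\mathcal{M}_s$ is contained in that of $\mathcal{C}$, and hence in $\mathcal{D}(\mathcal{C},\beta)$; by the minimality in Definition \ref{def:cyclic derivative descendant}, $\mathcal{D}(\mathcal{M}_s,\beta)\subseteq\mathcal{D}(\mathcal{C},\beta)$. By Lemma \ref{lemma: 1}, $\mathcal{D}(\mathcal{M}_s,\beta)$ has exponent set $\texttt{cc}\big(P(s)\big)$, independently of $\beta$. Taking the sum over all $s\in\texttt{cr}(S_{\mathcal{C}})$, and using that a sum of extended minimal cyclic codes is the extended cyclic code whose exponent set is the union of the corresponding cyclotomic cosets, we get $\mathcal{E}=\sum_{s\in\texttt{cr}(S_{\mathcal{C}})}\mathcal{D}(\mathcal{M}_s,\beta)\subseteq\mathcal{D}(\mathcal{C},\beta)$. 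Combining the two inclusions yields $\mathcal{D}(\mathcal{C},\beta)=\mathcal{E}$, which is manifestly independent of $\beta$; its exponent set is $S_{\mathcal{D}}$ and its representative set is $\bigcup_{s\in\texttt{cr}(S_{\mathcal{C}})}\texttt{cr}\big(P(s)\big)$.

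The only point requiring a little care --- and the main (minor) obstacle --- is the bookkeeping when the cyclotomic cosets produced by different $s\in\texttt{cr}(S_{\mathcal{C}})$ overlap: one must observe that each $\texttt{cc}\big(P(s)\big)$ is already a union of full cyclotomic cosets, so $S_{\mathcal{D}}$ is again such a union and genuinely defines an extended cyclic code, and that passing to coset representatives commutes with taking the union. Everything else is a direct consequence of linearity of $\Delta_{\beta}$ and Lemma \ref{lemma: 1}.
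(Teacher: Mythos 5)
Your proof is correct and takes essentially the same route as the paper: the paper simply asserts that the theorem follows ``immediately'' from Lemma~\ref{lemma: 1} together with the direct-sum decomposition $\mathcal{C}=\bigoplus_{s\in \texttt{cr}(S_{\mathcal{C}})}\mathcal{M}_s$, and your two-inclusion argument (linearity of $\Delta_{\beta}$ for the containment in $\mathcal{E}$, and $\mathcal{D}(\mathcal{M}_s,\beta)\subseteq\mathcal{D}(\mathcal{C},\beta)$ for the reverse) is exactly the bookkeeping the authors leave implicit.
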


\begin{example}\label{ex:ex_2}
Continuation of Example \ref{ex:ex_1}. 
The representative set of the exponent set of the $(16, 7)$ extended cyclic code $\mathcal{C}$ is $\{0, 1, 5\}$.
According to (\ref{eq:Ds}), $P(0) = \emptyset$, $P(1) = \{0\}$, $P(5) = \{0, 1, 4\}$.
From Theorem \ref{thm:DSN of DD}, we conclude that 
the cyclic DD of $\mathcal{C}$ denoted by $\mathcal{D}(\mathcal{C})$ is the $(16, 5)$ extended cyclic code with the exponent set $S_{\mathcal{D}} = \{0, 1, 2, 4, 8\}$, i.e., 
$$\mathcal{D}(\mathcal{C})=\{[A(\alpha^i), i \in I]: A(z) = A_0 + T_4(A_1z)\}.$$ 
It shows that $\mathcal{D}(\mathcal{C})$ is the $(16, 5)$ Hadamard code.
For code codeword $\bm{a} = [A(\alpha^i), i\in I] \in \mathcal{C}$, all its derivatives, 
\begin{equation*}
\begin{aligned}
&  [\Delta_{\alpha^0}A(\alpha^i), i\in I] \\
&  [\Delta_{\alpha^1}A(\alpha^i), i\in I] \\ 
& ... \\
&  [\Delta_{\alpha^{14}}A(\alpha^i), i\in I],
\end{aligned} 
\end{equation*} 
are codewords in $\mathcal{D}(\mathcal{C})$ as illustrated in Fig. \ref{fig:descendantExample}.
\end{example}

For any nontrivial extended binary cyclic code $\mathcal{C}$, i.e. $S_{\mathcal{C}}\neq \{0\}$, we give the following propositions to characaterize the dimension and distance of their cyclic DDs. 

For any binary vector $\bm{v}$, we denote its Hamming weight by $\texttt{wt}(\bm{v})$.
For a subset $S$ of $[n]$, we define $\texttt{deg}(S) \triangleq \texttt{max}(\bigcup_{s\in S}\{\texttt{wt}(\overline{s})\})$.
Let $d$ denote the minimum Hamming distance of $\mathcal{C}$.
And let $k_{\mathcal{D}}$ and $d_{\mathcal{D}}$ denote the dimension and the minimum Hamming distance of $\mathcal{D}(\mathcal{C})$, respectively.

\begin{proposition}\label{prop:dd dimension}
$k_{\mathcal{D}} \leq \sum_{i=0}^{\texttt{\emph{deg}}(S_{\mathcal{C}})-1}\binom{m}{i}$.
\end{proposition}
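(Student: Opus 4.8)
The plan is to bound $|S_{\mathcal{D}}|$ directly, since the dimension of an extended cyclic code equals the size of its exponent set, so $k_{\mathcal{D}} = |S_{\mathcal{D}}|$. By Theorem \ref{thm:DSN of DD}, $S_{\mathcal{D}} = \bigcup_{s\in \texttt{cr}(S_{\mathcal{C}})}\texttt{cc}\big(P(s)\big)$, so it suffices to show that every element of this union has a binary expansion of Hamming weight at most $\texttt{deg}(S_{\mathcal{C}})-1$, and then to count the integers in $[n]$ of bounded weight.

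First I would track Hamming weights through the two operations $P(\cdot)$ and $\texttt{cc}(\cdot)$. By \eqref{eq:Ds}, every $k\in P(s)$ has $W_k\subsetneqq W_s$, hence $\texttt{wt}(\overline{k}) = |W_k| \le |W_s| - 1 = \texttt{wt}(\overline{s}) - 1$. If $s\in\texttt{cr}(S_{\mathcal{C}})$, then $s$ lies in the same cyclotomic coset as some element of $S_{\mathcal{C}}$, and multiplication by $2$ modulo $n = 2^m-1$ acts as a cyclic rotation of the length-$m$ binary expansion, so it preserves Hamming weight; therefore $\texttt{wt}(\overline{s}) \le \texttt{deg}(S_{\mathcal{C}})$. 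Combining, every $k\in P(s)$ with $s\in\texttt{cr}(S_{\mathcal{C}})$ satisfies $\texttt{wt}(\overline{k}) \le \texttt{deg}(S_{\mathcal{C}}) - 1$. Taking $\texttt{cc}$ only appends elements obtained from these $k$ by multiplying by powers of $2$, which again preserves weight, so every element of $S_{\mathcal{D}}$ has binary weight at most $\texttt{deg}(S_{\mathcal{C}}) - 1$.

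Finally I would count. Since $\mathcal{C}$ is nontrivial, $S_{\mathcal{C}}$ contains some $s$ with $\texttt{wt}(\overline{s})\ge 1$, so $\texttt{deg}(S_{\mathcal{C}})\ge 1$; also $\texttt{deg}(S_{\mathcal{C}})\le m-1$ because the only integer of weight $m$ is $2^m-1 = n\notin[n]$. Hence the threshold $r \triangleq \texttt{deg}(S_{\mathcal{C}}) - 1$ satisfies $0 \le r \le m-2$. For each $i<m$, exactly $\binom{m}{i}$ integers in $[n] = \{0,1,\dots,2^m-2\}$ have binary weight $i$, so the number of integers in $[n]$ of weight at most $r$ is $\sum_{i=0}^{r}\binom{m}{i}$. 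Therefore $k_{\mathcal{D}} = |S_{\mathcal{D}}| \le \sum_{i=0}^{\texttt{deg}(S_{\mathcal{C}})-1}\binom{m}{i}$.

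I expect the only mild subtleties to be bookkeeping: confirming that $\texttt{cr}$ and $\texttt{cc}$ leave the Hamming weight of binary expansions invariant (the standard fact that multiplication by $2$ modulo $2^m-1$ is a cyclic bit rotation), and noting that the weight-$0$ element $0\in P(s)$ is harmless as long as $\texttt{deg}(S_{\mathcal{C}})\ge 1$. There is no genuinely hard step; once Theorem \ref{thm:DSN of DD} is available, the bound is a weight-counting argument.
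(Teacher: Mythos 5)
Your proposal is correct and follows essentially the same route as the paper's proof: invoke Theorem \ref{thm:DSN of DD} for $S_{\mathcal{D}}$, observe that $P(\cdot)$ drops the binary Hamming weight by one while $\texttt{cc}(\cdot)$ and $\texttt{cr}(\cdot)$ preserve it (multiplication by $2$ modulo $2^m-1$ being a bit rotation), and then count integers in $[n]$ of bounded weight. The paper packages the weight-tracking through its $\texttt{deg}(\cdot)$ notation rather than element-by-element, but the argument is the same, and your explicit handling of $s=0$ and of the count $\binom{m}{i}$ matches the paper's bookkeeping.
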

\begin{proof}

The dimension of $\mathcal{D}(\mathcal{C})$ satisfies
$$k_{\mathcal{D}} = |S_{\mathcal{D}}| \leq \sum_{i=0}^{\texttt{deg}(S_{\mathcal{D}})}\binom{m}{i}.$$
From Theorem \ref{thm:DSN of DD},
\begin{equation*}
\begin{aligned}
\texttt{deg}(S_{\mathcal{D}}) = \texttt{deg}\Bigg(\bigcup_{s\in \texttt{cr}(S_{\mathcal{C}})}\texttt{cc}\Big(P(s)\Big)\Bigg).
\end{aligned}
\end{equation*}
Note that the binary expansion of $2s$ modulo $n$ is a cyclic shift of $\overline{s}$.
Thus, $\texttt{deg}(C_s) = \texttt{wt}(\overline{s})$ and $\texttt{deg}\Bigg(\texttt{cc}\Big(P(s)\Big)\Bigg) = \texttt{deg}\Big(P(s)\Big)$.
Then, 
\begin{equation*}
\begin{aligned}
\texttt{deg}(S_{\mathcal{D}}) = \texttt{deg}\Big(\bigcup_{s\in \texttt{cr}(S_{\mathcal{C}})}P(s)\Big).
\end{aligned}
\end{equation*}
Note that $\bigcup_{s\in \texttt{cr}(S_{\mathcal{C}})}P(s) \subseteq \bigcup_{s\in S_{\mathcal{C}}}P(s)$, then 
\begin{equation*}
\begin{aligned}
\texttt{deg}(S_{\mathcal{D}}) \leq \texttt{deg}\Big(\bigcup_{s\in S_{\mathcal{C}}}P(s)\Big).
\end{aligned}
\end{equation*}
For $s=0$, we have $W_s=\emptyset$ and $P(s)=\emptyset$. Then $\texttt{deg}\Big(P(s)\Big)=0$.
For any positive $s\in S_{\mathcal{C}}$, from (\ref{eq:Ds}), we have $\texttt{deg}\Big(P(s)\Big) = \texttt{wt}(\overline{s})-1$.
Then $\texttt{deg}\Big(\bigcup_{s\in S_{\mathcal{C}}}P(s)\Big) =  \texttt{deg}(S_{\mathcal{C}})-1$.
As a result,
$$k_{\mathcal{D}} \leq \sum_{i=0}^{\texttt{deg}(S_{\mathcal{D}})}\binom{m}{i} \leq \sum_{i=0}^{\texttt{deg}(S_{\mathcal{C}})-1}\binom{m}{i}.$$
\end{proof}

\begin{proposition}\label{prop:distance of dd}
$d_{\mathcal{D}} \leq 2d$.
\end{proposition}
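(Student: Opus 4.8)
The plan is to realize every derivative of a codeword as a sum of two coordinate‑permuted copies of that codeword, so that the Hamming weight can at most double, and then to arrange that the derivative of a minimum‑weight codeword does not vanish.

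Fix a nonzero direction $\beta$. Since $\mathbb{F}_{2^m}$ has characteristic $2$, the translation $\tau_\beta:x\mapsto x+\beta$ is a fixed‑point‑free bijection of $\mathbb{F}_{2^m}$; under the identification of $\mathbb{F}_{2^m}$ with $I$ given by $\alpha^i\leftrightarrow i$ and $0\leftrightarrow\infty$, it induces a permutation $\sigma_\beta$ of the coordinate set $I$. For any $\bm{a}=[A(\alpha^i),i\in I]\in\mathcal{C}$ we then have
\begin{equation*}
\Delta_\beta A(\alpha^i)=A(\alpha^i+\beta)+A(\alpha^i)=a_{\sigma_\beta(i)}+a_i,
\end{equation*}
that is, the derivative vector equals $\bm{a}+\sigma_\beta(\bm{a})$, where $\sigma_\beta(\bm{a})$ is $\bm{a}$ with its coordinates permuted by $\sigma_\beta$. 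Since permuting coordinates preserves weight, subadditivity of the Hamming weight gives $\texttt{wt}(\Delta_\beta A)\le 2\,\texttt{wt}(\bm{a})$. Choosing $\bm{a}$ to be a minimum‑weight codeword of $\mathcal{C}$, the vector $\Delta_\beta A$ is a codeword of $\mathcal{D}(\mathcal{C})$ (Definition \ref{def:cyclic derivative descendant} and Theorem \ref{thm:DSN of DD}) of weight at most $2d$.

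It remains to pick $\beta$ so that $\Delta_\beta A\neq\bm{0}$. One has $\Delta_\beta A=\bm{0}$ for every $\beta$ exactly when $z\mapsto A(z)$ is invariant under all translations of $\mathbb{F}_{2^m}$, i.e. when $\bm{a}$ is a constant vector, $\bm{a}\in\{\bm{0},\bm{1}\}$. So it suffices to see that a minimum‑weight codeword of a nontrivial $\mathcal{C}$ is not the all‑ones word: if $0\notin S_{\mathcal{C}}$ the all‑ones word is not in $\mathcal{C}$; if $0\in S_{\mathcal{C}}$ while $S_{\mathcal{C}}\neq\{0\}$, pick $j\in S_{\mathcal{C}}$ with $j\neq 0$, so $\mathcal{M}_j\subseteq\mathcal{C}$ contains a nonzero word whose exponent set $C_j$ omits $0$; that word is neither $\bm{0}$ nor $\bm{1}$ and hence has weight at most $n$, so $d\le n<n+1$ and the unique weight‑$(n+1)$ word $\bm{1}$ is not of minimum weight. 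Therefore some $\Delta_\beta A$ is a nonzero codeword of $\mathcal{D}(\mathcal{C})$ of weight $\le 2d$, which gives $d_{\mathcal{D}}\le 2d$. The only delicate point is this final nondegeneracy step, which is precisely where the hypothesis $S_{\mathcal{C}}\neq\{0\}$ is used; the weight bound itself is immediate once the derivative is recognized as $\bm{a}+\sigma_\beta(\bm{a})$.
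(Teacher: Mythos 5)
Your proof is correct and follows essentially the same route as the paper: take a minimum-weight codeword, observe that its derivative in direction $\beta$ is the sum of two weight-$d$ vectors (the codeword and its translate under the coordinate permutation induced by $x\mapsto x+\beta$), and bound the weight by $2d$ via subadditivity. You are in fact slightly more careful than the paper, whose proof stops at the weight bound and omits the check that the derivative of a minimum-weight codeword can be chosen nonzero so that it genuinely witnesses $d_{\mathcal{D}}\leq 2d$; your nondegeneracy argument using $S_{\mathcal{C}}\neq\{0\}$ correctly closes that small gap.
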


\begin{proof}
Consider the codeword $A(z)\in \mathcal{C}$ with $\texttt{wt}\Big(A(z)\Big) = d$.
The Hamming weight of its derivative $\Delta_{\beta}A(z)$ satisifies
\begin{equation*}
\begin{aligned}
\texttt{wt}\Big(\Delta_{\beta}A(z)\Big) = & \texttt{wt}\Big(A(z+\beta)-A(z)\Big) \\
\leq & \texttt{wt}\Big(A(z+\beta)\Big)+ \texttt{wt}\Big(A(z)\Big) \\
= & 2d.
\end{aligned}
\end{equation*}
As a result, $d_{\mathcal{D}} \leq 2d$.
\end{proof}

\subsection{Cyclic derivative ascendant}

Conversely to the cyclic DDs, we define cyclic DAs of an extended cyclic code as follows.

\begin{definition}\label{def:derivative ascendant}
For an extended cyclic code $\mathcal{C}$, we define its cyclic \emph{derivative ascendant} denoted by $\mathcal{A}(\mathcal{C})$ as the extended cyclic code with the largest dimension
such that $\mathcal{D}\Big(\mathcal{A}(\mathcal{C})\Big) \subseteq \mathcal{C}$.
\end{definition}

We give a proposition to characterize the exponent set of the cyclic DA of $\mathcal{C}$.
\begin{proposition}\label{prop:ascendant DS}
Let $S_{\mathcal{A}}$ denote the exponent set of $\mathcal{A}(\mathcal{C})$. A nonnegative integer $s$ smaller than $n$ is in $S_{\mathcal{A}}$ if and only if
\begin{equation*}
\texttt{cc}\Big(P(s)\Big) \subseteq S_{\mathcal{C}}.
\end{equation*}
\end{proposition}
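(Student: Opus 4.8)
The plan is to characterize $S_{\mathcal{A}}$ by a direct double-inclusion argument built on the correspondence between extended cyclic codes and their exponent sets, together with Theorem \ref{thm:DSN of DD}. Recall that an extended cyclic code is determined by its exponent set, that the containment $\mathcal{C}_1 \subseteq \mathcal{C}_2$ holds if and only if $S_{\mathcal{C}_1} \subseteq S_{\mathcal{C}_2}$, and that by Theorem \ref{thm:DSN of DD} the descendant $\mathcal{D}(\mathcal{C})$ has exponent set $S_{\mathcal{D}} = \bigcup_{s \in \texttt{cr}(S_{\mathcal{C}})} \texttt{cc}(P(s))$. Since $\mathcal{A}(\mathcal{C})$ is defined as the \emph{largest}-dimension extended cyclic code $\mathcal{E}$ with $\mathcal{D}(\mathcal{E}) \subseteq \mathcal{C}$, the first step is to argue that this maximum is well-defined: the defining property is closed under sums of exponent sets, because $\mathcal{D}$ distributes over direct sums (Lemma \ref{lemma: 1} handles the minimal-code pieces, and $P$, $\texttt{cc}$, $\texttt{cr}$ all behave additively over cyclotomic-coset decompositions). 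Hence $S_{\mathcal{A}}$ is simply the union of all exponent sets $S_{\mathcal{E}}$ satisfying $\bigcup_{s \in \texttt{cr}(S_{\mathcal{E}})}\texttt{cc}(P(s)) \subseteq S_{\mathcal{C}}$, and it suffices to show this union equals $\{s \in [n] : \texttt{cc}(P(s)) \subseteq S_{\mathcal{C}}\}$.

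For the forward direction ($s \in S_{\mathcal{A}} \Rightarrow \texttt{cc}(P(s)) \subseteq S_{\mathcal{C}}$): if $s \in S_{\mathcal{A}}$, then the whole cyclotomic coset $C_s \subseteq S_{\mathcal{A}}$ (exponent sets are unions of cyclotomic cosets), so some representative of $C_s$ lies in $\texttt{cr}(S_{\mathcal{A}})$, and since $\texttt{cc}(P(s))$ is invariant under replacing $s$ by $2s \bmod n$ (as noted in the proof of Proposition \ref{prop:dd dimension}, doubling cyclically shifts the binary expansion and $P$, $\texttt{cc}$ are preserved), we get $\texttt{cc}(P(s)) \subseteq \bigcup_{t \in \texttt{cr}(S_{\mathcal{A}})}\texttt{cc}(P(t)) = S_{\mathcal{D}(\mathcal{A}(\mathcal{C}))} \subseteq S_{\mathcal{C}}$, using $\mathcal{D}(\mathcal{A}(\mathcal{C})) \subseteq \mathcal{C}$ and Theorem \ref{thm:DSN of DD}.

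For the reverse direction, suppose $\texttt{cc}(P(s)) \subseteq S_{\mathcal{C}}$. Let $\mathcal{E}$ be the extended cyclic code with exponent set $\texttt{cc}(C_s) = C_s$ (the minimal cyclic code $\mathcal{M}_s$). By Lemma \ref{lemma: 1}, $\mathcal{D}(\mathcal{M}_s)$ has exponent set $\texttt{cc}(P(s)) \subseteq S_{\mathcal{C}}$, so $\mathcal{D}(\mathcal{M}_s) \subseteq \mathcal{C}$, whence $C_s \subseteq S_{\mathcal{A}}$ and in particular $s \in S_{\mathcal{A}}$. This closes the loop. The main obstacle I anticipate is not any single computation but the bookkeeping around well-definedness: one must make sure that ``largest dimension'' genuinely yields a unique maximal code rather than merely maximal ones, which requires the additivity of the map $\mathcal{D}$ over direct sums (equivalently, the fact that $\bigcup_s P(s)$-type conditions are preserved under unions of exponent sets). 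Once that closure property is in hand the two inclusions are short; everything else is a matter of carefully invoking Theorem \ref{thm:DSN of DD} and Lemma \ref{lemma: 1} and the $2s$-invariance of $\texttt{cc}(P(\cdot))$.
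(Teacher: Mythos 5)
Your proposal is correct and follows essentially the same route as the paper: reduce to the minimal cyclic code $\mathcal{M}_s$ via closure of exponent sets under cyclotomic cosets, prove the forward direction from $\mathcal{D}(\mathcal{M}_s)\subseteq\mathcal{D}\big(\mathcal{A}(\mathcal{C})\big)\subseteq\mathcal{C}$ together with Lemma \ref{lemma: 1}, and the reverse direction from the maximality in Definition \ref{def:derivative ascendant}. Your extra step of justifying that the ``largest dimension'' code is well-defined (via linearity of $\Delta_\beta$, so the defining property is closed under sums) is a point the paper's proof silently assumes, and is a worthwhile addition rather than a departure.
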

\begin{proof}
Because the conjugacy constraint is required, $s\in S_{\mathcal{A}}$ if and only if $C_s \subseteq S_{\mathcal{A}}$
which is equivalent to $\mathcal{M}_s \subseteq \mathcal{A}(\mathcal{C})$.

If $\mathcal{M}_s \subseteq \mathcal{A}(\mathcal{C})$, from (\ref{eq:derivative}) and Definition \ref{def:derivative ascendant}, we have $\mathcal{D}(\mathcal{M}_s) \subseteq \mathcal{D}\Big(\mathcal{A}(\mathcal{C})\Big) \subseteq \mathcal{C}$.
From Lemma \ref{lemma: 1}, the exponent set of $\mathcal{D}(\mathcal{M}_s)$ is $\texttt{cc}\Big(P(s)\Big)$.
Then $\texttt{cc}\Big(P(s)\Big) \subseteq S_\mathcal{C}$.

If $\texttt{cc}\Big(P(s)\Big) \subseteq S_{\mathcal{C}}$, then $\mathcal{D}(\mathcal{M}_s) \subseteq \mathcal{C}$.
From Definition \ref{def:derivative ascendant}, $\mathcal{A}(\mathcal{C})$ is the extended cyclic code with the largest dimension such that $\mathcal{D}\Big(\mathcal{A}(\mathcal{C})\Big) \subseteq \mathcal{C}$. As a result, $\mathcal{M}_s\subseteq \mathcal{A}(\mathcal{C})$.
\end{proof}

Now we investigate the dimension and distance of $\mathcal{A}(\mathcal{C})$. 
Let $k_{\mathcal{A}}$ and $d_{\mathcal{A}}$ denote the dimension and minimum Hamming distance of $\mathcal{A}(\mathcal{C})$, respectively.
We give the following propositions.
The proofs are given in the Appendix.

\begin{figure*}[htbp]
\centering
\includegraphics[width=0.8\textwidth]{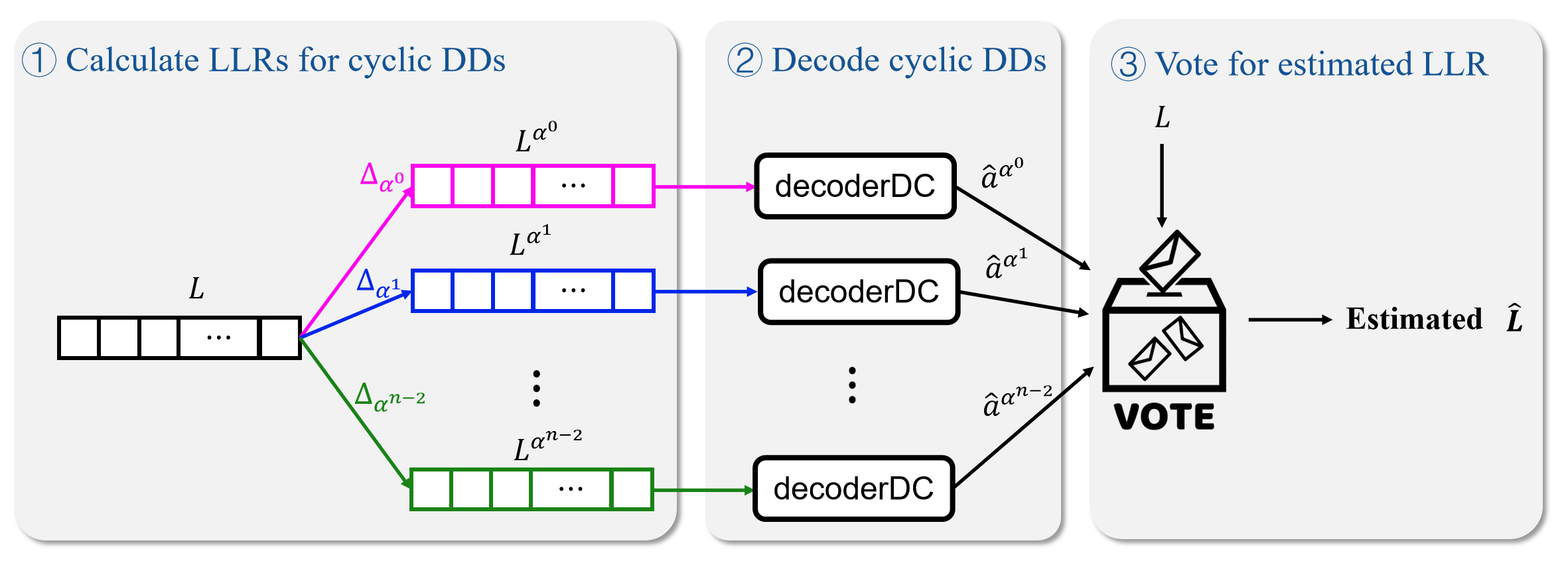}
\caption{Derivative decoding for extended cyclic codes.}
\label{fig:decoding process}
\end{figure*}

\begin{proposition} \label{prop:da dimension}
$k_{\mathcal{A}} \leq \sum_{i=0}^{\texttt{deg}(S_{\mathcal{C}})+1}\binom{m}{i}$.
\end{proposition}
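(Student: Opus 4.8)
The plan is to control $k_{\mathcal{A}}=|S_{\mathcal{A}}|$ by a two-step argument that parallels the proof of Proposition~\ref{prop:dd dimension}. Step one is a purely combinatorial cardinality bound: every element of $S_{\mathcal{A}}$ is a nonnegative integer below $n=2^m-1$, hence is uniquely encoded by its binary support $W_s\subseteq[m]$, and that support has size $\texttt{wt}(\overline{s})\le\texttt{deg}(S_{\mathcal{A}})$; therefore $S_{\mathcal{A}}$ embeds into the family of subsets of $[m]$ of size at most $\texttt{deg}(S_{\mathcal{A}})$, giving the bound
\begin{equation*}
k_{\mathcal{A}}=|S_{\mathcal{A}}|\le\sum_{i=0}^{\texttt{deg}(S_{\mathcal{A}})}\binom{m}{i}.
\end{equation*}
Step two is to show $\texttt{deg}(S_{\mathcal{A}})\le\texttt{deg}(S_{\mathcal{C}})+1$, after which monotonicity of $t\mapsto\sum_{i=0}^{t}\binom{m}{i}$ immediately yields the claimed inequality.

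For step two I would invoke Proposition~\ref{prop:ascendant DS}. Pick any $s\in S_{\mathcal{A}}$ and set $w:=\texttt{wt}(\overline{s})$; the case $w=0$ is trivial, so assume $w\ge1$. Choose any $j\in W_s$ and let $V:=W_s\setminus\{j\}$, a proper subset of $W_s$; then $s':=\sum_{i\in V}2^i$ lies in $P(s)$ by~(\ref{eq:Ds}), and since $P(s)\subseteq\texttt{cc}\big(P(s)\big)$, Proposition~\ref{prop:ascendant DS} forces $s'\in\texttt{cc}\big(P(s)\big)\subseteq S_{\mathcal{C}}$. As $\texttt{wt}(\overline{s'})=|V|=w-1$, we conclude $\texttt{deg}(S_{\mathcal{C}})\ge w-1$, i.e.\ $w\le\texttt{deg}(S_{\mathcal{C}})+1$. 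Taking the maximum over $s\in S_{\mathcal{A}}$ gives $\texttt{deg}(S_{\mathcal{A}})\le\texttt{deg}(S_{\mathcal{C}})+1$, and chaining with step one completes the proof.

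I do not anticipate a genuine obstacle; the argument is short, and the only points needing care are bookkeeping ones: the degenerate case $s=0$ (where $P(s)=\emptyset$ but the weight bound holds vacuously), and—most importantly—getting the direction of the inclusion right. Proposition~\ref{prop:ascendant DS} characterizes $S_{\mathcal{A}}$ by an inclusion $\texttt{cc}\big(P(s)\big)\subseteq S_{\mathcal{C}}$ that points ``into'' $S_{\mathcal{C}}$, so a high-weight element of $S_{\mathcal{A}}$ forces the presence of an element of weight one less in $S_{\mathcal{C}}$, not the reverse. Exhibiting the single witness $s'$ is cleaner than arguing through the identity $\texttt{deg}\big(\texttt{cc}(P(s))\big)=\texttt{deg}\big(P(s)\big)$, although that identity (already used for Proposition~\ref{prop:dd dimension}) would also close the gap.
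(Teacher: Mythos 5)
Your proposal is correct and follows essentially the same route as the paper's own proof: the same cardinality bound $|S_{\mathcal{A}}|\le\sum_{i=0}^{\texttt{deg}(S_{\mathcal{A}})}\binom{m}{i}$ combined with Proposition~\ref{prop:ascendant DS} and the fact that $P(s)\subseteq S_{\mathcal{C}}$ contains elements of weight $\texttt{wt}(\overline{s})-1$, yielding $\texttt{deg}(S_{\mathcal{A}})\le\texttt{deg}(S_{\mathcal{C}})+1$. Your single-witness phrasing via $s'=s-2^j$ is just a slightly more explicit rendering of the paper's use of $\texttt{deg}\big(P(s)\big)=\texttt{wt}(\overline{s})-1$, and your explicit handling of the $s=0$ case is a small tidiness improvement.
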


\begin{proposition} \label{prop: da distance}
$d_{\mathcal{A}} \geq d/2$.
\end{proposition}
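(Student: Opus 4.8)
The plan is to run the argument for Proposition~\ref{prop:distance of dd} in reverse. Fix a nonzero codeword $\bm{a} = [A(\alpha^i),\, i\in I]$ of $\mathcal{A}(\mathcal{C})$ of minimum weight $d_{\mathcal{A}}$. I will exhibit a direction $\beta$ (a power of $\alpha$) for which the derivative $\Delta_{\beta}A$ is a \emph{nonzero} codeword of $\mathcal{C}$. Once this is done, on one hand $\texttt{wt}(\Delta_{\beta}A)\ge d$ because $\mathcal{D}\big(\mathcal{A}(\mathcal{C})\big)\subseteq\mathcal{C}$ by Definition~\ref{def:derivative ascendant}, and on the other hand $\texttt{wt}(\Delta_{\beta}A)\le 2d_{\mathcal{A}}$; together these give $d_{\mathcal{A}}\ge d/2$.

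The upper bound on $\texttt{wt}(\Delta_{\beta}A)$ is the easy half, and holds for every $\beta$: since $x\mapsto x+\beta$ is a permutation of $\mathbb{F}_{2^m}$, the vector $[A(\alpha^i+\beta),\, i\in I]$ is a coordinate permutation of $\bm{a}$ and hence also has weight $d_{\mathcal{A}}$; writing $[\Delta_{\beta}A(\alpha^i),\, i\in I] = [A(\alpha^i+\beta),\, i\in I] + \bm{a}$ and applying the triangle inequality for Hamming weight yields $\texttt{wt}(\Delta_{\beta}A)\le 2d_{\mathcal{A}}$.

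The real work — and the step I expect to be the only obstacle — is guaranteeing a $\beta$ with $\Delta_{\beta}A\neq\bm{0}$, since otherwise the bound $\texttt{wt}(\Delta_{\beta}A)\ge d$ is unavailable. Here I would use that the exponents of $A(z)$ all lie in $[n]$, so $\deg A(z)\le 2^m-2$ and $A(z)$ is completely determined by its values on $\mathbb{F}_{2^m}$; in particular $A(z)$ is a constant polynomial if and only if it is a constant function on $\mathbb{F}_{2^m}$, which happens exactly when $\bm{a}\in\{\bm{0},\bm{1}\}$. If $\bm{a}=\bm{1}$ then $d_{\mathcal{A}} = n+1\ge d/2$ trivially (as $d\le n+1$). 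Otherwise $A$ is a non-constant function, so there are distinct $x_1,x_2\in\mathbb{F}_{2^m}$ with $A(x_1)\neq A(x_2)$; taking $\beta = x_1+x_2\in\mathbb{F}_{2^m}^{*}$ we get $\Delta_{\beta}A(x_2) = A(x_1)+A(x_2)\neq 0$, so $[\Delta_{\beta}A(\alpha^i),\, i\in I]$ is nonzero, lies in $\mathcal{D}\big(\mathcal{A}(\mathcal{C})\big)\subseteq\mathcal{C}$, and therefore has weight at least $d$. Combined with the upper bound, this gives $2d_{\mathcal{A}}\ge d$, i.e.\ $d_{\mathcal{A}}\ge d/2$.
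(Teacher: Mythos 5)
Your proof is correct and takes essentially the same route as the paper, which simply applies Proposition~\ref{prop:distance of dd} to $\mathcal{A}(\mathcal{C})$ to get $d_{\mathcal{D}(\mathcal{A})}\leq 2d_{\mathcal{A}}$ and combines it with $\mathcal{D}\big(\mathcal{A}(\mathcal{C})\big)\subseteq\mathcal{C}$, i.e.\ $d_{\mathcal{D}(\mathcal{A})}\geq d$. The one difference is that you explicitly choose a direction $\beta$ making $\Delta_{\beta}A$ nonzero (so that the lower bound $\geq d$ actually applies); the paper leaves this point implicit inside Proposition~\ref{prop:distance of dd}, so your version is, if anything, the more careful one.
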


\section{Derivative Decoding For Extended Binary Cyclic Codes}\label{sec:decoding algorithm}
In this section, we propose a derivative decoding based on the decodings of cyclic DDs. It can efficiently decode the cyclic codes whose cyclic DDs have efficient soft-decision decoding algorithms. 
In particular, we propose to perform the derivative decoding on those codes whose cyclic DDs are extended Euclidean Geometry (EG) codes \cite{kou2001low}\cite[Chap. 8]{lin2001error} which can be efficiently decoded by the \emph{sum-product algorithm} (SPA) \cite{kschischang2001factor,lucas2000iterative}.
In addition, we discuss the cyclic DDs and DAs of RM codes, and their decodings.

\subsection{Algorithm description}

Let $\bm{y} = [y_i, i\in I]$ denote the received vector of transmitting a codeword $\bm{a} = [A(\alpha^i), i\in I]$ of the extended binary cyclic code $\mathcal{C}$ over a binary-input memoryless symmetric (BMS) channel.
Let $W(y|x)$ denote the probability that $y$ is output by the channel when $x$ is input to the channel.
The LLR vector of the channel output $\bm{y}$ is $\bm{L}=[L_i: i \in I]$, where $L_i$ is given by
\begin{equation}\label{eq:llr}
    L_i = \ln( \frac{W(y_i|0)}{W(y_i|1)} ).
\end{equation}

The algorithm takes $\bm{L}$ as an input and runs in an iterative manner. 
Let $B$ denote a collection of directions, i.e. a subset of $\mathbb{F}^*_{2^m}$, where $\mathbb{F}^*_{2^m}$ consists of all the nonzero elements in $\mathbb{F}_{2^m}$. As shown in Fig. \ref{fig:decoding process}, each iteration has three steps: 1) calculate LLR vectors of cyclic DDs for all $\beta \in B$; 2) decode all the cyclic DDs; 3) vote for the estimated codeword from the decoded descendant codewords.

% \addtolength{\topmargin}{0.01in}
\begin{algorithm}
\caption{Derivative Decoding Based on Cyclic Derivative Descendants}\label{alg:DD}
\textbf{Input:} The LLR vector $\bm{L}$; the maximum iteration number $N_{\emph{max}}$; a collection of directions $B$; the parity check matrix $\mathbf{H}$

\textbf{Output:} The decoded codeword: $\hat{\bm{a}}$

\begin{algorithmic}[1]
\For {$t=1,2,\dots,N_{\emph{max}}$} 

\For {$\beta \in B$}

\State $\bm{L}^{\beta} \gets \texttt{derivativeLLR}(\bm{L}, \beta)$

\State $\hat{\bm{a}}^{\beta} \gets \texttt{decoderDD}(\bm{L}^{\beta})$

\State $\widetilde{\bm{L}}^{\beta} \gets \texttt{getVote}(\bm{L}, \hat{\bm{a}}^{\beta}, \beta)$

\EndFor

\State $\bm{L} \gets \frac{1}{|B|}\sum_{\beta\in \mathbb{F}_{2^m}^*}\widetilde{\bm{L}}^{\beta}$
\Comment{Here, $\sum$ denotes the component-wise summation}

\State $\hat{a}_i \gets \mathbbm{1}[L_i<0]$ for all $i\in I$

\If {$\mathbf{H}\bm{a}^{\text{T}} = \bm{0}$}

\State \textbf{Break}

\EndIf

\EndFor

\State \textbf{return} $\hat{\bm{a}}$
\end{algorithmic}

\end{algorithm}

1) The LLR vector associated with the cyclic DD in the direction $\beta$ is defined as 
\begin{equation}\label{eq:llr dd}
    \bm{L}^{\beta}  \triangleq [L^{\beta}_i,i \in I],
\end{equation}
where $L_i^\beta$ is the LLR value associated with $\Delta_{\beta}A(\alpha^i) = A(\alpha^i+\beta) - A(\alpha^i)$.
We calculate $L^{\beta}_i$ as 
\begin{equation}\label{eq:llr dd bit}
    L^{\beta}_i = 2 \tanh ^{-1}\Big(\tanh(\frac{L_i}{2})\tanh(\frac{L_{j}}{2})\Big),
\end{equation}
where $j$ satisfies $\alpha^{j}=\alpha^i+\beta$.
We denote the procedure of calculating $\bm{L}^{\beta}$ with the input $\bm{L}$ and $\beta$ by $\bm{L}^{\beta} = \texttt{derivativeLLR}(\bm{L}\text{, }\beta)$.

2) According to Theorem 1, we can use the same decoder, denoted by $\texttt{decoderDD}$, to decode all the cyclic DDs. The decoding result of $\bm{L}^{\beta}$ is given by $\hat{\bm{a}}^{\beta} = \texttt{decoderDD}(\bm{L}^{\beta})$.

3) The final step is to use a soft-voting scheme to obtain a new LLR vector $\hat{\bm{L}}$.
From (\ref{eq:llr dd bit}), 
the ``soft vote'' from the estimate $\hat{a}_i^{\beta}$ to $L_i$ is $\widetilde{L}_i^{\beta} \triangleq (1-2\hat{a}^{\beta}_i)L_{j}$.
For the direction $\beta$, the ``soft vote'' from $\hat{\bm{a}}^{\beta}$ to $\bm{L}$ is given by
\begin{equation*}
\widetilde{\bm{L}}^{\beta} = \texttt{getVote}(\bm{L}, \hat{\bm{a}}^{\beta}, \beta)
\triangleq [(1-2\hat{a}^{\beta}_i)L_{j}, i\in I].
\end{equation*}
Here we have used the natural embedding of $\mathbb{F}_2$ in $\mathbb{R}$ for the interpretation of $\hat{a}^{\beta}_i$ in the above equation.
Update $\bm{L}$ as the average of all the ``soft votes'' from different directions
$$\bm{L} = \frac{1}{|B|}\sum_{\beta\in \mathbb{F}_{2^m}^*}\widetilde{\bm{L}}^{\beta}.$$
Here, $\sum$ denotes the component-wise summation.

Once we update $\bm{L}$, we take $\hat{\bm{a}}=[\hat{a}_i, i\in I]$ where  $\hat{a}_i = \mathbbm{1}[L_i<0]$.
If $\hat{\bm{a}}$ is a codeword in $\mathcal{C}$, i.e. $\mathbf{H}\hat{\bm{a}}^{\text{T}}=\bm{0}$ where $\mathbf{H}$ is the partiry-check matrix of $\mathcal{C}$ and $\hat{\bm{a}}^\text{T}$ denotes the transpose of $\hat{\bm{a}}$, we end the iteration and output $\hat{\bm{a}}$.
Otherwise, proceed into the next iteration unless obtaining a codeword or reaching a maximal iteration number $N_{\emph{max}}$.
The pseudo code of the above procedure is shown in Algorithm \ref{alg:DD}.

\begin{remark}\label{remark_1}
The proposed decoding works for cyclic codes of length of $2^m -1$ as well. 
Set $L_{\infty}$=0. Then we can decode them as their extended cyclic codes.
\end{remark}

\begin{remark}
In Algorithm \ref{alg:DD}, the functions $\texttt{derivativeLLR}$, $\texttt{decoderDD}$ and $\texttt{getVote}$ can be implemented separately for each direction.
As a result, the proposed algorithm can be implemented in parallel.
\end{remark}

\subsection{Computational complexity}

We analyze the computational complexity of the proposed algorithm per iteration according to Algorithm \ref{alg:DD}. 
Denote the code length by $n$.
It takes $4n$ floating point operations to perform $\texttt{derivativeLLR}$. 
In each iteration, the decoder performs $\texttt{derivativeLLR}$ and $\texttt{decoderDD}$ $|B|$ times. 
At the end of each iteration, it needs $|B|n$ floating point operations for calculating the average of all the ``soft votes''. 
Denote the number of floating point operation of $\texttt{decoderDD}$ by $\Omega$.
We get the following proposition.
\begin{proposition}\label{prop:complexity}
The derivative decoding consumes $5|B|n+|B|\Omega$ floating point operations per iteration.
\end{proposition}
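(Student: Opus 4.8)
The plan is to simply account for the floating-point operations line by line in the body of Algorithm~\ref{alg:DD} within a single pass of the outer \texttt{for} loop. First I would note that the inner loop over $\beta \in B$ executes $|B|$ times, and inside it there are exactly three calls: \texttt{derivativeLLR}, \texttt{decoderDD}, and \texttt{getVote}. For \texttt{derivativeLLR}, observe from (\ref{eq:llr dd bit}) that computing each $L_i^\beta$ requires evaluating two $\tanh$ values, a multiplication, and a $2\tanh^{-1}$; counting these as a fixed number of operations per coordinate and summing over the $n$ coordinates (the extended coordinate at $\infty$ contributes a negligible constant, or is absorbed into the count), this gives the stated $4n$ operations. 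The call to \texttt{decoderDD} costs $\Omega$ by definition. The call to \texttt{getVote} produces $[(1-2\hat a_i^\beta)L_j, i\in I]$, which is one multiplication (or sign flip) per coordinate, i.e.\ $n$ operations.

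Next I would add these up across the inner loop: per direction the cost is $4n + \Omega + n = 5n + \Omega$, and over all $|B|$ directions this is $|B|(5n + \Omega) = 5|B|n + |B|\Omega$. Then I would check the remaining lines outside the inner loop: the averaging step $\bm L \gets \frac{1}{|B|}\sum_\beta \widetilde{\bm L}^\beta$ is a component-wise sum of $|B|$ vectors of length $n$ followed by a scaling, costing on the order of $|B|n$ operations — but this has already been counted as part of \texttt{getVote}, or alternatively I would fold the $n$ per-direction \texttt{getVote} cost and this $|B|n$ averaging cost together, however the paper's text attributes ``$|B|n$ floating point operations for calculating the average'' separately while \texttt{getVote} is bundled differently; I would reconcile the bookkeeping so the final tally is exactly $5|B|n$. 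The hard-decision step $\hat a_i \gets \mathbbm{1}[L_i<0]$ and the syndrome check $\mathbf H \hat{\bm a}^{\mathrm T} = \bm 0$ involve only comparisons and binary (GF(2)) operations, not floating-point arithmetic, so by convention they are excluded from the count.

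The only real subtlety — and the step I would be most careful about — is making the bookkeeping of the ``$5$'' coefficient match exactly: the $4n$ from \texttt{derivativeLLR}, plus the $n$ from the per-direction \texttt{getVote} sign-flip, plus how the global averaging's $|B|n$ is charged. One clean way is to say \texttt{getVote} itself is $O(n)$ negligible sign flips absorbed elsewhere, and the genuine $n$-per-direction cost together with the $|B|n$ averaging cost combine to the remaining $|B|n$; added to $4|B|n$ from \texttt{derivativeLLR} this yields $5|B|n$, and with the $|B|$ decoder calls at $\Omega$ each we get $5|B|n + |B|\Omega$, which is the claim. There is no deep obstacle here — the proposition is a routine complexity accounting — so the ``proof'' is essentially the itemized list above, and I would present it as a short paragraph walking through Algorithm~\ref{alg:DD} line by line and summing.
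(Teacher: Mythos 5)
Your proposal is correct and takes essentially the same route as the paper, which likewise just tallies $4n$ floating-point operations per \texttt{derivativeLLR} call, $\Omega$ per \texttt{decoderDD} call, and $|B|n$ for averaging the soft votes, giving $4|B|n+|B|n+|B|\Omega=5|B|n+|B|\Omega$. The only difference is that you explicitly flag the minor bookkeeping ambiguity of where the remaining $|B|n$ is charged (the per-direction \texttt{getVote} sign flips versus the final averaging step), a point the paper glosses over but which does not affect the total.
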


\subsection{Derivative decoding for eBCH codes based on SPA}
Consider the $(64,24)$ eBCH code and the $(64,45)$ eBCH code.
The corresponding generator polynomials in hexadecimal form are \emph{0xF69AC20921} and \emph{0x782CF}, respectively.
From (\ref{eq:exponent set}),
the corresponding representative sets are $S_1=\{0, 1, 3, 5, 9, 21\}$ with $\texttt{deg}(S_1)=3$ and $S_2=\{0, 1, 3, 5, 7, 9, 11, 13, 21, 27\}$ with $\texttt{deg}(S_2)=4$.
According to Theorem \ref{thm:DSN of DD}, 
the representative sets associated with their cyclic DDs are $\{0, 1, 5\}$ and $\{0, 1, 3, 5, 9, 11, 13\}$.
The corresponding codes are the $(64, 13)$ extended EG code and a $(64, 34)$ extended cyclic code. Please note that the $(64, 34)$ extended cyclic code is a subcode of the $(64, 37)$ extended EG code. 

Consider \emph{additive white Gaussian noise} (AWGN) channels. We decode the two eBCH codes by our proposed derivative decoding algorithm.
The decoder for their cyclic DDs is the SPA decoder. 
The parity-check matrix used for decoding the $(64, 13)$ extended EG code is a $336\times 64$ matrix with row weight $4$ and column weight $21$,
and the one used for decoding the $(64, 34)$ extended cyclic code is a $72\times 64$ matrix with row weight $8$ and column weight $17$.
We set the maximum iteration numbers for SPA and DD as $N_{\emph{SPA,max}}=20$ and $N_{\emph{DD,max}}=3$, respectively.
We perform derivative decoding in all the $63$ directions and denote the procedure by DD$(63)$-SPA. 
In addition, we perform derivative decoding in $16$ directions at random and denote the procedure by DD$(16)$-SPA.
We compare with the performance of decoding BCH codes using the Berlekamp-Massey (BM) algorithm \cite{berlekamp2015algebraic}\cite{massey1969shift}.

The simulation results are shown in Fig. \ref{fig:bch2eg} and the performance of the MLD is also provided.
We see that at the block error ratio (BLER) of $10^{-4}$, DD$(63)$-SPA for the $(64,24)$ eBCH code and the $(64, 45)$ eBCH code outperforms BM for the corresponding BCH codes about $2.9$ dB and $1.9$ dB, respectively.  Moreover, the gaps between the MLD and DD$(63)$-SPA are $0.5$ dB and $0.3$ dB in, respectively. In addition, for decoding the $(64, 45)$ eBCH code, DD$(16)$-SPA only performs about $0.2$ dB away from DD$(63)$-SPA at the BLER of $10^{-4}$.

\begin{figure*}[htbp]
\centering
\subfigure[$(64, 24)$ eBCH]{%
\begin{minipage}[t]{0.45\linewidth}
\flushleft 
\label{fig:data64_24}%
\includegraphics[width=1.0\textwidth]{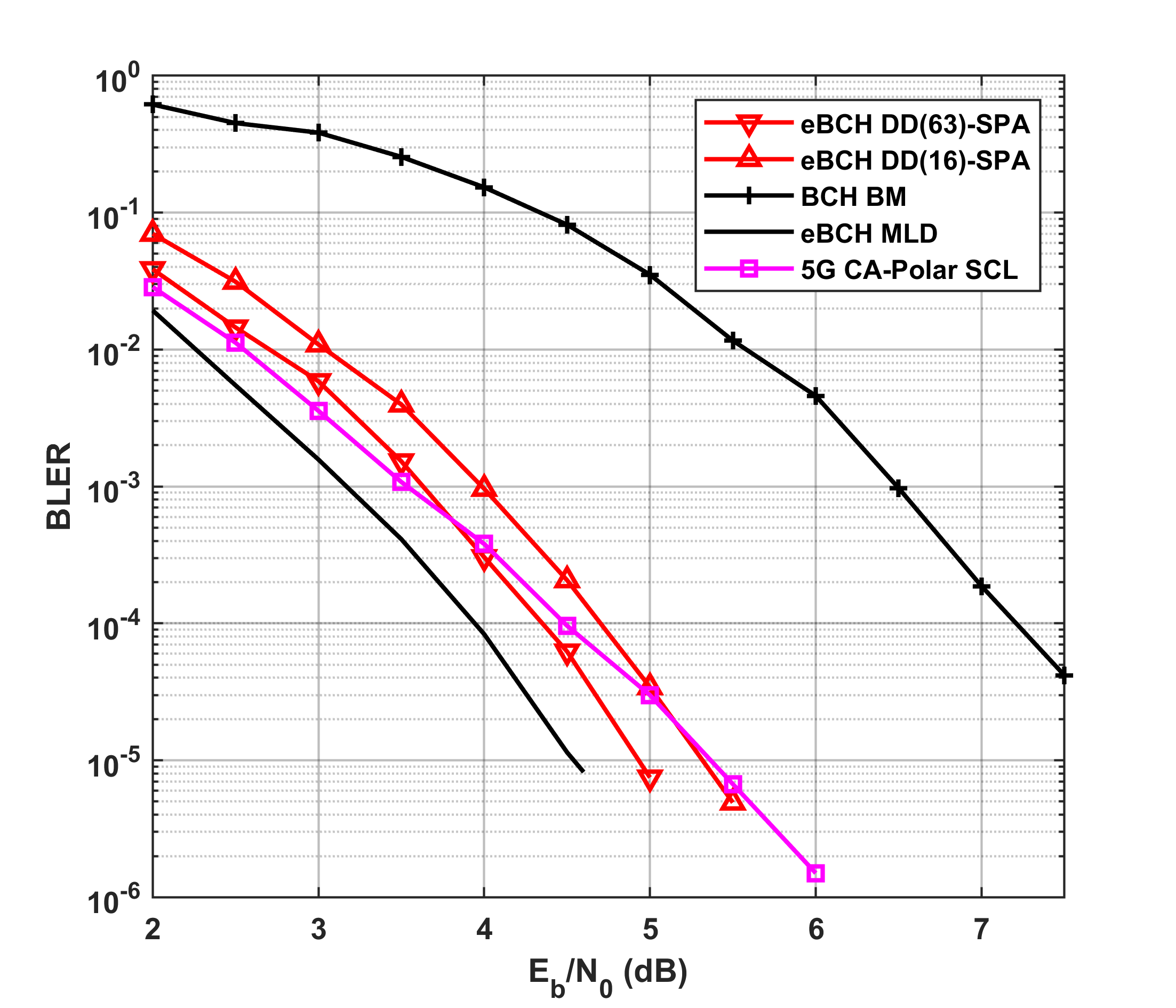}
\end{minipage}
}
\subfigure[$(64, 45)$ eBCH]{%
\begin{minipage}[t]{0.45\linewidth}
\flushleft 
\label{fig:data64_45}%
\includegraphics[width=1.0\textwidth]{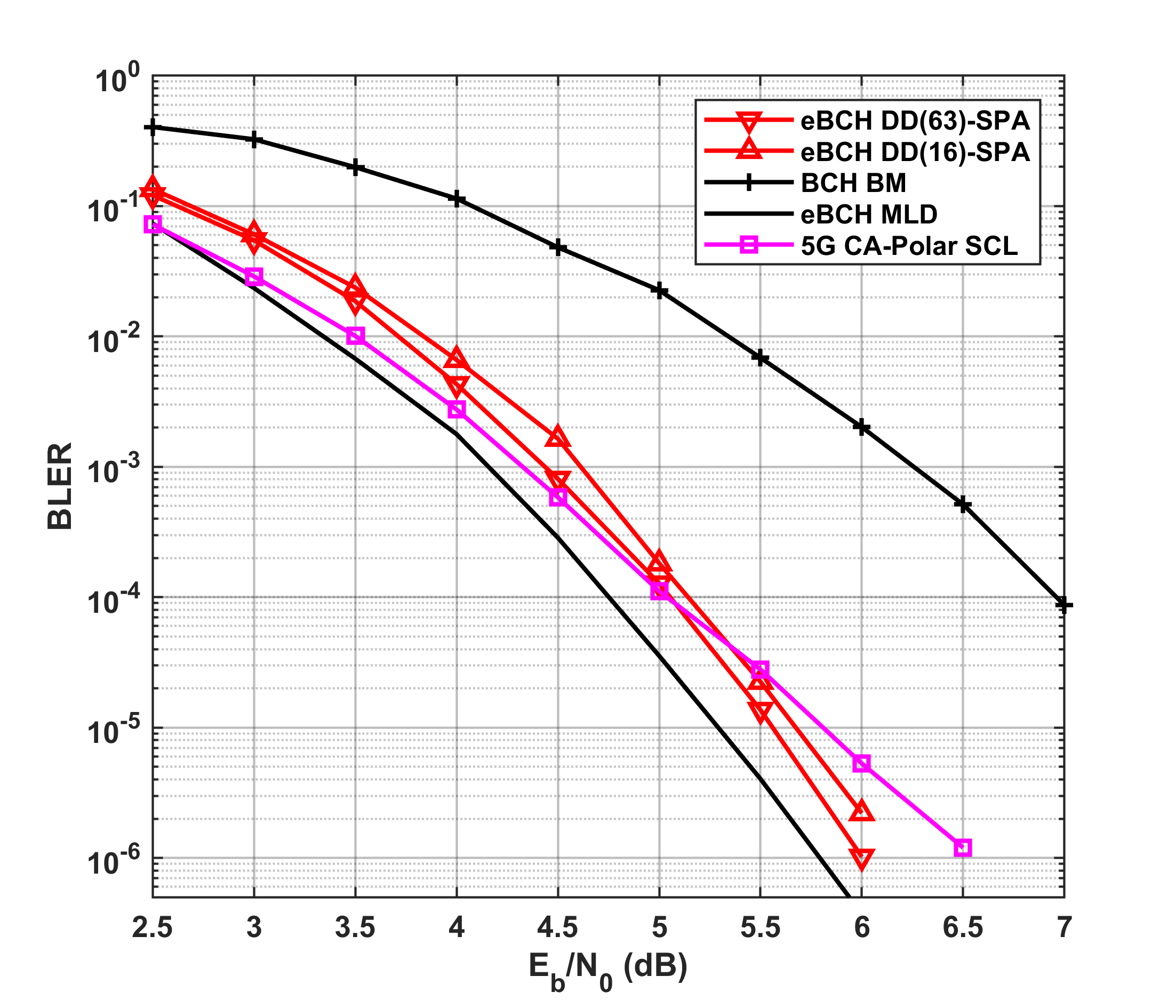}
\end{minipage}
}
\caption{Performance of decoding eBCH codes using DD-SPA, decoding BCH codes using BM algorithm and decoding CA-polar codes using SCL decoder. 
The list size of the SCL decoder is $32$ and the CRC length is $6$.}
\label{fig:bch2eg}
\end{figure*}

Besides, we compare with the performance of decoding 5G CA-polar codes \cite{arikan2009channel}\cite{TS38212} with the same length and dimension.
The decoder used for the CA-polar codes is the Successive Cancellation List (SCL) decoder\cite{tal2015list} with list size $32$. And the CRC length is set to $6$. At the BLER of $10^{-5}$, DD($63$)-SPA for the $(64,24)$ eBCH code and the $(64, 45)$ eBCH code outperforms SCL for the CA-polar codes with the same length and dimension about $0.4$ dB and $0.2$ dB, respectively.

We discuss the computational complexity of decoding the $(64, 45)$ eBCH code using DD$(63)$-SPA. Denote the floating point operation cost of the SPA in one iteration by $\Omega_{\emph{SPA}}$. From Proposition \ref{prop:complexity}, the number of the floating point operations is $320*63+63N_{\emph{SPA}}\Omega_{\emph{SPA}}$ per iteration where $\Omega_{\emph{SPA}}$ is the average iteration number of the SPA. 
At the $E_b/N_0$ of $5.0$ dB, the average iteration number of the SPA is $1.16$ and the average iteration number of DD$(63)$-SPA is $1.00$.
As a result, the cost of DD$(63)$-SPA is around $20160+73\Omega_{\emph{SPA}}$ floating point operations.
Besides, the average iteration number of the SPA in DD$(16)$-SPA and the average iteration number of DD$(16)$-SPA is also $1.16$ and $1.00$, respectively. 
Then the cost of DD$(16)$-SPA is around $5120 +  19\Omega_{\emph{SPA}}$ floating point operations.

\subsection{Cyclic DDs of RM codes and their decoding}

Consider the RM code of length $2^m$ and order $r$ denoted by RM$(r,m)$.
The zero set of the generator polynomial associated with RM$(r,m)$ is $\{\alpha^j:0<\texttt{wt}(\overline{j})<m-r\}$ \cite[Chap. 6]{blahut2003algebraic}.
According to (\ref{eq:exponent set}), the exponent set of RM$(r,m)$ is $\{j:0\leq\texttt{wt}(\overline{j})\leq r\}$.
According to Theorem \ref{thm:DSN of DD}, we conclude that the cyclic DD of RM$(r, m)$ is RM$(r-1, m)$. Note that the dimension and minimum Hamming distance of RM$(r, m)$ are $\sum_{i=0}^{r}\binom{m}{i}$ and $2^{m-r}$, respectively. The equalities in Proposition \ref{prop:dd dimension} and Proposition \ref{prop:distance of dd} hold for RM codes and their cyclic DDs. Similarly, according to Proposition \ref{prop:ascendant DS}, we conclude that the cyclic DA of RM$(r, m)$ is RM$(r+1, m)$. The equalities in Proposition \ref{prop:da dimension} and Proposition \ref{prop: da distance} hold for the RM codes and their cyclic DAs.

As a result, we can decode RM$(r,m)$ codes using derivative decoding based on the decodings of RM$(r-1,m)$ codes.
Let $T$ denote a subset of $\mathbb{F}_{2^m}$ such that $(\beta+T)\cup T= \mathbb{F}_{2^m}$\footnote{For a subset $T$ of $\mathbb{F}_{2^m}$, $\beta + T$ denote the set $\{\beta+\alpha^i : \alpha^i \in T\}$.}.
From (\ref{eq:derivative}), we have $[\Delta_{\beta}A(\alpha^{i}), \alpha^i\in T] = [\Delta_{\beta}A(\alpha^{i}), \alpha^i\in \beta+T]$.
Considering the $|\bm{u}|\bm{u}+\bm{v}|$-construction and the automorphism groups of RM codes\cite[Chap. 13]{macwilliams1977theory}, we conclude that $[\Delta_{\beta}A(\alpha^{i}), \alpha^i\in T]$ is a codeword of RM$(r-1,m-1)$. The proof is given in the Appendix.
From (\ref{eq:llr dd bit}), we have $[L^{\beta}_{i}, \alpha^i\in T] = [L^{\beta}_{i}, \alpha^i\in \beta+T]$, 
where $[L^{\beta}_{i}, \alpha^i\in T]$ is the LLR vector associated with $[\Delta_{\beta}A(\alpha^{i}), \alpha^i\in T]$. 
In other words, our derivative decoding derived from the MS polynomials can carry on based on the decodings of RM$(r-1,m-1)$ codes as the state-of-the-art projection decodings \cite{ye2020recursive,lian2020decoding} derived from the $m$-variate polynomials, and obtain the same performance.

\subsection{Decoding cyclic derivative ascendants of EG codes}

If an extended cyclic code can be efficiently soft-decision decoded, then its cyclic DA can be soft-decision decoded by the derivative decoding algorithm.
Consider the $(256, 175)$ extended EG code with minimum Hamming distance 18 which can be efficiently decoded by the SPA decoder. 
The corresponding generator polynomial is \emph{0x11377F7700FA55335BA55}.
The representative set of its exponent set is $S=\{0,1,3,5,7,9,11,13,17,19,21,$ $23,25,27, 29,37,39 ,43,51,53,55,59,85,87,119\}$ with $\texttt{deg}(S)=6$.
According to Proposition \ref{prop:ascendant DS}, we can construct its cyclic DA. It is an extended cyclic code of length $256$ and dimension $191 \leq \sum_{i=0}^{7}\binom{8}{i}$ with distance $d \geq 18/2=9$, according to Proposition \ref{prop:da dimension} and Proposition \ref{prop: da distance}. In fact, this code, denoted by DA$(256, 191)$ has minimum Hamming distance at least $16$ according to the BCH bound \cite{blahut2003algebraic}.
The corresponding generator polynomial of DA$(256, 191)$ is \emph{0x19ACCC1AE68A0CEFF}.

In Fig. \ref{fig:data256_191}, we provide the simulation result of decoding DA$(256, 191)$ using derivative decoding based on SPA with all the directions in $\mathbb{F}^*_{2^m}$, denoted by DD($255$)-SPA.
The parity-check matrix used for decoding the $(256, 175)$ extended EG code is a $272\times256$ matrix with row weight $16$ and column weight $17$.
The maximum iteration numbers for derivative decoding and SPA are set to $N_{\emph{DD,max}}=4$ and $N_{\emph{SPA,max}}=20$, respectively.
The performance of the MLD is also provided.
We see that at the BLER of $10^{-4}$, the gap between the MLD and DD-SPA is about $0.9$ dB.

\begin{figure}[htbp]
\centering
\includegraphics[width=0.5\textwidth]{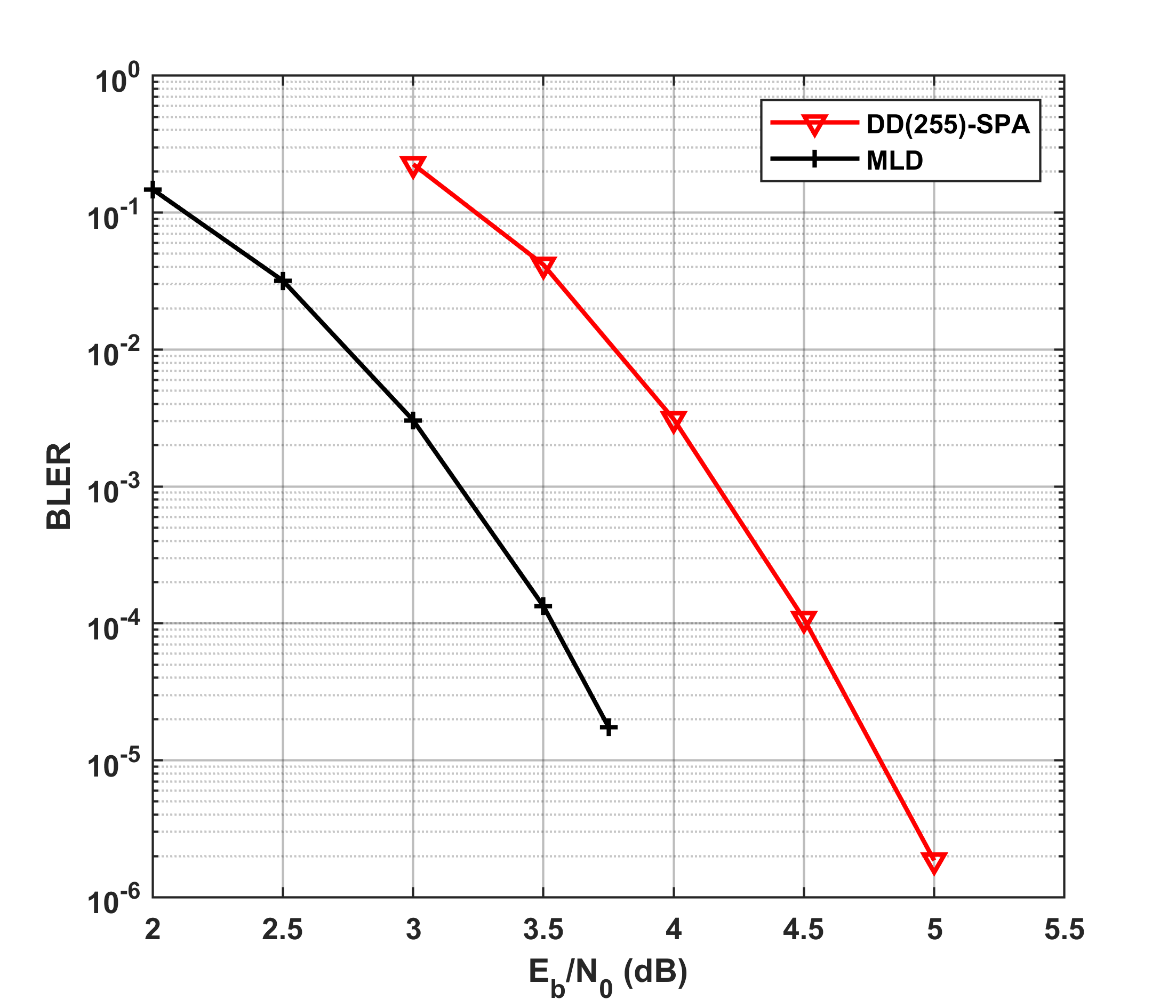}
\caption{Performance of decoding DA$(256, 191)$ using DD$(255)$-SPA.}
\label{fig:data256_191}
\end{figure}

\section{Minimal Derivative Descendants and Derivative Decoding}
\label{sec:MDD}

This section investigates the minimal subspace which contains all the derivatives of an extended cyclic code in one direction. We prove that all these subspaces, denoted as the minimal DDs, are equivalent. Similarly, we can decode an extended cyclic code based on the decodings of its minimal DDs. Simulation results show that the derivative decoding based on the OSD with order-$1$ can outperform the OSD with higher order. 
In the following, we denote the OSD with order-$l$ by OSD$(l)$.

\subsection{Minimal derivative descendants of extended cyclic codes}
\begin{definition}\label{def:derivative descendant}
Consider an extended cyclic code $\mathcal{C}$ and the direction $\beta$. We denote the minimal subspace which contains the derivatives of all the codewords of $\mathcal{C}$ in $\beta$ as the \emph{minimal derivative descendant} in $\beta$
\begin{equation*}
\mathcal{D}_{\beta}(\mathcal{C})=\{[\Delta_{\beta}A(\alpha^i), i \in I]: A(z)\in \mathcal{C} \}.
\end{equation*}
\end{definition}

Two codes are said to be \emph{equivalent} \cite[Chap. 1]{macwilliams1977theory} if there is a permutation of the coordinates together with permutations of the coordinate values for each of the coordinates, that map the codewords of one code into those of the other. The following theorem proves that the minimal DDs in different directions are equivalent.

For an integer $b$, we make the agreement $\infty + b = \infty \text{ mod }n$. For a codeword $\bm{a} = [a_i, i\in I] \in \mathcal{C}$, we define the $b$-cyclic shift of $\bm{a}$ as $\bm{a}^{(b)}\triangleq[a_{i+b}, i\in I]$. 

\begin{theorem}\label{thm:equivalent}
The minimial DDs of an extended cyclic code $\mathcal{C}$ in different directions are equivalent.
\end{theorem}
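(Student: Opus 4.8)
The plan is to exploit two facts: that $\mathcal{C}$ is an extended cyclic code, and that the scaling $z\mapsto\alpha^{b}z$ on Mattson--Solomon polynomials realizes a $b$-cyclic shift on codewords (as noted in Section~\ref{sec:preli} for $b=1$). The starting point is an elementary commutation identity between the derivative operator and this scaling. If $A(z)$ is the MS polynomial of $\bm{a}\in\mathcal{C}$, then $\tilde{A}(z)\triangleq A(\alpha^{b}z)$ is the MS polynomial of the $b$-cyclic shift $\bm{a}^{(b)}$, and directly from (\ref{eq:derivative}),
$$\Delta_{\beta}\tilde{A}(z)=\tilde{A}(z+\beta)-\tilde{A}(z)=A(\alpha^{b}z+\alpha^{b}\beta)-A(\alpha^{b}z)=(\Delta_{\alpha^{b}\beta}A)(\alpha^{b}z).$$
In words, differentiating the $b$-shift of $\bm{a}$ in direction $\beta$ yields the $b$-shift of the derivative of $\bm{a}$ in direction $\alpha^{b}\beta$; evaluated at the coordinate $\infty$ this stays consistent with the conventions $\alpha^{\infty}\alpha^{b}=\alpha^{\infty}$ and $\infty+b=\infty$.

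Next I would turn this into a statement about the codes. Since $\mathcal{C}$ is extended cyclic, $\bm{a}\mapsto\bm{a}^{(b)}$ is a bijection of $\mathcal{C}$ onto itself, so replacing $A$ by $\tilde{A}$ in Definition~\ref{def:derivative descendant} does not change the set $\mathcal{D}_{\beta}(\mathcal{C})$. Combining this with the identity above and the observation that $[v(\alpha^{i+b}),i\in I]$ is exactly the $b$-cyclic shift of $[v(\alpha^{i}),i\in I]$, one obtains
$$\mathcal{D}_{\beta}(\mathcal{C})=\{\bm{c}^{(b)}:\bm{c}\in\mathcal{D}_{\alpha^{b}\beta}(\mathcal{C})\}.$$
This simultaneously proves the announced remark that a cyclic shift of a codeword of one minimal DD is a codeword of another.

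To finish, observe that the $b$-cyclic shift is a permutation of the coordinate set $I$ (it cycles $[n]$ and fixes $\infty$), so $\mathcal{D}_{\beta}(\mathcal{C})$ and $\mathcal{D}_{\alpha^{b}\beta}(\mathcal{C})$ differ by a coordinate permutation and are therefore equivalent. Given two arbitrary directions, write them as $\beta_{1}=\alpha^{j_{1}}$ and $\beta_{2}=\alpha^{j_{2}}$ (every direction is a power of the primitive element $\alpha$) and take $b=j_{2}-j_{1}\bmod n$, so that $\alpha^{b}\beta_{1}=\beta_{2}$; then $\mathcal{D}_{\beta_{1}}(\mathcal{C})$ and $\mathcal{D}_{\beta_{2}}(\mathcal{C})$ are equivalent, and transitivity of equivalence completes the proof. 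The only point requiring care is the bookkeeping: verifying that $z\mapsto\alpha^{b}z$ on the MS side is precisely the $b$-cyclic shift on the evaluation side, correctly handling the $\infty$ coordinate under the stated conventions, and recording that $\mathcal{D}_{\beta}(\mathcal{C})$ is a genuine linear subspace (the image of the linear map $\Delta_{\beta}$) so that the notion of equivalence applies; I do not anticipate a deeper obstacle.
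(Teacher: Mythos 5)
Your proposal is correct and follows essentially the same route as the paper: both rest on the commutation identity between the derivative operator and the scaling $z\mapsto\alpha^{b}z$ (which realizes a cyclic shift on the evaluation side), showing that $\mathcal{D}_{\beta_1}(\mathcal{C})$ is a cyclic shift of $\mathcal{D}_{\beta_2}(\mathcal{C})$ and hence equivalent. Your additional remark that $\bm{a}\mapsto\bm{a}^{(b)}$ is a bijection of $\mathcal{C}$ onto itself just makes explicit the set-equality step that the paper leaves implicit.
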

\begin{proof}
Let $\beta_1$ and $\beta_2$ be powers of $\alpha$, i.e. $\beta_1=\alpha^{b_1}$ and $\beta_2=\alpha^{b_2}$.
For any codeword $\Delta_{\beta_1}A(z) \in \mathcal{D}_{\beta_1}(\mathcal{C})$, there is $\Delta_{\beta_2}A(\beta_2^{-1}\beta_1z) \in \mathcal{D}_{\beta_2}(\mathcal{C})$ such that
\begin{equation}\label{eq:cyclicshiftofderivative}
\begin{aligned}
\Delta_{\beta_2}A(\beta_2^{-1}\beta_1z) &= A\Big(\beta_2^{-1}\beta_1(z+\beta_2)\Big) - A(\beta_2^{-1}\beta_1z)\\
& = A(\alpha^{b_1-b_2}z+\beta_1) - A(\alpha^{b_1-b_2}z),
\end{aligned}
\end{equation}
is the $(b_1-b_2)$-cyclic shift of $\Delta_{\beta_1}A(z)$.
Please note that this is true for any pair of $\beta_1$ and $\beta_2$.
As a result, the minimal DDs of $\mathcal{C}$ in all the directions are equivalent.
\end{proof}
The above theorem shows that the cyclic shift of a codeword in one minimal DD is a codeword in another minimal DD.
We obtain the following corollary immediatly from (\ref{eq:cyclicshiftofderivative}) by setting $b_1=b$ and $b_2=0$.

\begin{corollary}\label{coro:coro_1}
For any codeword $A(z)\in \mathcal{C}$, the $b$-cyclic shift of $\Delta_{\alpha^b}A(z)$ is equal to $\Delta_{1}A(\alpha^b z)$.
\end{corollary}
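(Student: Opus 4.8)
The final statement to prove is Corollary~\ref{coro:coro_1}, which asserts that the $b$-cyclic shift of $\Delta_{\alpha^b}A(z)$ equals $\Delta_1 A(\alpha^b z)$. The plan is to derive this directly as a specialization of the identity established in the proof of Theorem~\ref{thm:equivalent}, namely equation~(\ref{eq:cyclicshiftofderivative}), rather than starting from scratch. That equation states that for $\beta_1 = \alpha^{b_1}$ and $\beta_2 = \alpha^{b_2}$, the polynomial $\Delta_{\beta_2}A(\beta_2^{-1}\beta_1 z) = A(\alpha^{b_1 - b_2} z + \beta_1) - A(\alpha^{b_1 - b_2} z)$ is the $(b_1 - b_2)$-cyclic shift of $\Delta_{\beta_1}A(z)$. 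So the first step is simply to set $b_1 = b$ and $b_2 = 0$, so that $\beta_1 = \alpha^b$ and $\beta_2 = \alpha^0 = 1$.

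With this substitution, the left-hand side of~(\ref{eq:cyclicshiftofderivative}) becomes $\Delta_1 A(1^{-1}\alpha^b z) = \Delta_1 A(\alpha^b z)$, and the conclusion of the identity says this equals the $(b - 0) = b$-cyclic shift of $\Delta_{\alpha^b} A(z)$. The second step is to note that this is exactly the claim of the corollary, once we recall that the $b$-cyclic shift operation on a codeword, written $\bm{a}^{(b)} = [a_{i+b}, i \in I]$, corresponds at the level of MS polynomials to the substitution $z \mapsto \alpha^b z$ (as stated in Section~\ref{sec:preli}, ``the MS polynomial of the cyclic shift of $\bm{a}$ is $A(\alpha z)$''), so the two descriptions of the $b$-cyclic shift of $\Delta_{\alpha^b}A(z)$ agree. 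This makes the corollary an immediate consequence, and essentially no new work is required.

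The only point deserving a sentence of care is the agreement $\infty + b = \infty \bmod n$ and the identity $\alpha^\infty \alpha = \alpha^\infty$, which guarantee that the substitution $z \mapsto \alpha^b z$ behaves correctly on the extended coordinate indexed by $\infty$; this is already built into the conventions set up just before Theorem~\ref{thm:equivalent}, so it can be invoked rather than re-proved. I do not anticipate any genuine obstacle here: the corollary is, by design, the $(b_1, b_2) = (b, 0)$ case of an identity already in hand, and the proof is a two-line specialization plus a pointer to the MS-polynomial description of cyclic shifts.
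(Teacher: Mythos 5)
Your proof is correct and follows exactly the paper's own route: the authors likewise obtain the corollary by specializing equation~(\ref{eq:cyclicshiftofderivative}) with $b_1=b$ and $b_2=0$. Your extra remarks on the MS-polynomial description of cyclic shifts and the $\infty$-coordinate convention are consistent with the paper and add nothing that conflicts with its argument.
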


According to Definitions \ref{def:cyclic derivative descendant} and \ref{def:derivative descendant}, the minimal DDs are the subcodes of the corresponding cyclic DDs. 
In the next proposition, we show that the cyclic DD of an extended cyclic code $\mathcal{C}$ is the summation of all the minimal DDs of $\mathcal{C}$.
\begin{proposition}\label{prop:summation of mDD is cDD}
For an extended cyclic code $\mathcal{C}$, the summation of all its minimal DDs is its cyclic DD, i.e., 
\begin{equation}
\mathcal{D}(\mathcal{C}) = \sum_{\beta\in \mathbb{F}_{2^m}^*} D_{\beta}(\mathcal{C}).
\end{equation}
\end{proposition}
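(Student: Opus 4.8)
The plan is to prove the two inclusions $\sum_{\beta}\mathcal{D}_{\beta}(\mathcal{C})\subseteq\mathcal{D}(\mathcal{C})$ and $\mathcal{D}(\mathcal{C})\subseteq\sum_{\beta}\mathcal{D}_{\beta}(\mathcal{C})$ separately. The first is immediate: by Definition \ref{def:cyclic derivative descendant}, $\mathcal{D}(\mathcal{C},\beta)$ contains the set $\{[\Delta_{\beta}A(\alpha^i),i\in I]:A(z)\in\mathcal{C}\}$, which is exactly $\mathcal{D}_{\beta}(\mathcal{C})$ by Definition \ref{def:derivative descendant}; and by Theorem \ref{thm:DSN of DD} every $\mathcal{D}(\mathcal{C},\beta)$ equals $\mathcal{D}(\mathcal{C})$. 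Hence $\mathcal{D}_{\beta}(\mathcal{C})\subseteq\mathcal{D}(\mathcal{C})$ for every $\beta\in\mathbb{F}_{2^m}^*$, and since $\mathcal{D}(\mathcal{C})$ is linear, the sum of all the $\mathcal{D}_{\beta}(\mathcal{C})$ is contained in it.

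For the reverse inclusion, the key step is to show that $\mathcal{V}\triangleq\sum_{\beta\in\mathbb{F}_{2^m}^*}\mathcal{D}_{\beta}(\mathcal{C})$ is itself an extended cyclic code, i.e. that it is closed under the cyclic shift $[a_i,i\in I]\mapsto[a_{i+1},i\in I]$. Since $\Delta_{\beta}$ and $\mathcal{C}$ are both linear, each $\mathcal{D}_{\beta}(\mathcal{C})$ is exactly the set of derivatives $\{\Delta_{\beta}A(z):A(z)\in\mathcal{C}\}$, so an arbitrary element of $\mathcal{V}$ can be written as $\bm{d}=\sum_{\beta}\Delta_{\beta}A_{\beta}(z)$ with $A_{\beta}(z)\in\mathcal{C}$. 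I would then apply the identity (\ref{eq:cyclicshiftofderivative}) from the proof of Theorem \ref{thm:equivalent} with $\beta_1=\beta$ and $\beta_2=\alpha^{-1}\beta$: the $1$-cyclic shift of $\Delta_{\beta}A_{\beta}(z)$ equals $\Delta_{\alpha^{-1}\beta}A_{\beta}(\alpha z)$. Because $A_{\beta}(\alpha z)$ is the MS polynomial of the cyclic shift of the codeword $A_{\beta}(z)$, it still lies in $\mathcal{C}$, so $\Delta_{\alpha^{-1}\beta}A_{\beta}(\alpha z)\in\mathcal{D}_{\alpha^{-1}\beta}(\mathcal{C})\subseteq\mathcal{V}$; summing over $\beta$ shows the $1$-cyclic shift of $\bm{d}$ lies in $\mathcal{V}$. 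Thus $\mathcal{V}$ is an extended cyclic code.

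To finish, note that $\mathcal{V}$ contains $\mathcal{D}_{1}(\mathcal{C})=\{[\Delta_{1}A(\alpha^i),i\in I]:A(z)\in\mathcal{C}\}$. By Definition \ref{def:cyclic derivative descendant}, $\mathcal{D}(\mathcal{C})=\mathcal{D}(\mathcal{C},1)$ is the extended cyclic code of smallest dimension containing that set, and, being the intersection of all extended cyclic codes containing it, it is contained in every such code; in particular $\mathcal{D}(\mathcal{C})\subseteq\mathcal{V}$. Combining with the first inclusion gives $\mathcal{D}(\mathcal{C})=\mathcal{V}=\sum_{\beta\in\mathbb{F}_{2^m}^*}\mathcal{D}_{\beta}(\mathcal{C})$.

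The main obstacle is establishing that $\mathcal{V}$ is closed under cyclic shift: the shift of a vector lying in one minimal DD does not remain in that minimal DD but moves to the one in the direction $\alpha^{-1}\beta$, so the argument genuinely requires summing over all directions of $\mathbb{F}_{2^m}^*$, and one must check that $\alpha^{-1}\beta$ ranges over all of $\mathbb{F}_{2^m}^*$ as $\beta$ does. Everything else reduces to bookkeeping with the already-established identity (\ref{eq:cyclicshiftofderivative}) and the linearity of $\Delta_{\beta}$.
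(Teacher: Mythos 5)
Your proposal is correct and follows essentially the same route as the paper: the key step in both is showing that $\sum_{\beta}\mathcal{D}_{\beta}(\mathcal{C})$ is closed under cyclic shift via the identity that the $1$-cyclic shift of $\Delta_{\beta}A(z)$ is $\Delta_{\alpha^{-1}\beta}A(\alpha z)$, and then invoking minimality of $\mathcal{D}(\mathcal{C})$. Your write-up is in fact slightly more careful than the paper's, since you treat a general element of the sum as a sum of single derivatives rather than a single $\Delta_{\beta}A(z)$.
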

\begin{proof}
We are going to prove that $\sum_{\beta\in \mathbb{F}_{2^m}^*} D_{\beta}(\mathcal{C})$ is the smallest extended cyclic code containing all the minimal DDs of $\mathcal{C}$.
Obviously, it is the smallest subspace of $\mathbb{F}_2^n$ containing all the minimal DDs of $\mathcal{C}$.
We only need to prove that $\sum_{\beta\in \mathbb{F}_{2^m}^*} D_{\beta}(\mathcal{C})$ is an extended cyclic code.

For any codeword $\Delta_{\beta}A(z) = A(z+\beta)-A(z)$ in $\sum_{\beta\in \mathbb{F}_{2^m}^*} D_{\beta}(\mathcal{C})$, its cyclic shift
\begin{equation}
\begin{aligned}
A(\alpha z+\beta)-A(\alpha z) & = A\Big(\alpha(z+\alpha^{-1}\beta)\Big)-A(\alpha z) \\
& = \Delta_{\alpha^{-1}\beta}A(\alpha z),
\end{aligned}
\end{equation}
is also a codeword in $\sum_{\beta\in \mathbb{F}_{2^m}^*} D_{\beta}(\mathcal{C})$.
As a result, $\sum_{\beta\in \mathbb{F}_{2^m}^*} D_{\beta}(\mathcal{C})$ is an extended cyclic code.
\end{proof}

In general, it is hard to determine the minimal Hamming distance of an arbitrary linear code.
With Proposition \ref{prop:summation of mDD is cDD}, we can tell that the minimal Hamming distance of a minimal DD of $\mathcal{C}$ is lower bounded by the minimal Hamming distance of the cyclic DD of $\mathcal{C}$, which can be lower bounded by the BCH bound \cite{bose1960further,bose1960class}.

\begin{example}
Continuation of Example \ref{ex:ex_2}. 
The generator matrix $\bm{G}$ of the $(16, 7)$ extended cyclic code is
\begin{equation*}
\begin{aligned}
\left[ 
\begin{array}{cccccccccccccccc}
1 & 1 & 0 & 0 & 0 & 1 & 0 & 1 & 1 & 1 & 0 & 0 & 0 & 0 & 0 & 0\\
1 & 0 & 1 & 0 & 0 & 0 & 1 & 0 & 1 & 1 & 1 & 0 & 0 & 0 & 0 & 0\\
1 & 0 & 0 & 1 & 0 & 0 & 0 & 1 & 0 & 1 & 1 & 1 & 0 & 0 & 0 & 0\\
1 & 0 & 0 & 0 & 1 & 0 & 0 & 0 & 1 & 0 & 1 & 1 & 1 & 0 & 0 & 0\\
1 & 0 & 0 & 0 & 0 & 1 & 0 & 0 & 0 & 1 & 0 & 1 & 1 & 1 & 0 & 0\\
1 & 0 & 0 & 0 & 0 & 0 & 1 & 0 & 0 & 0 & 1 & 0 & 1 & 1 & 1 & 0\\
1 & 0 & 0 & 0 & 0 & 0 & 0 & 1 & 0 & 0 & 0 & 1 & 0 & 1 & 1 & 1\\
\end{array}
\right]. \\
\end{aligned}
\end{equation*}
The columns are indexed by $0$, $\alpha^0$, $\alpha^1$, ..., $\alpha^{14}$.
Calculate the derivatives of the rows of $\bm{G}$ in $\alpha^0$  and obtain the following matrix 
\begin{equation*}
\begin{aligned}
 \left[ 
\begin{array}{cccccccccccccccc}
     0&     0&     1&     1&     0&     1&     0&     1&     1&     1&     1&     0&     0&     0&     1&     0 \\
     1&     1&     1&     1&     0&     1&     1&     0&     0&     1&     0&     1&     0&     0&     0&     0 \\
     1&     1&     0&     0&     0&     0&     1&     1&     1&     0&     1&     1&     0&     0&     1&     0 \\
     1&     1&     0&     0&     1&     0&     1&     0&     0&     0&     0&     1&     1&     1&     0&     1 \\
     1&     1&     1&     1&     0&     1&     1&     0&     0&     1&     0&     1&     0&     0&     0&     0 \\
     1&     1&     0&     0&     0&     0&     1&     1&     1&     0&     1&     1&     0&     0&     1&     0 \\
     1&     1&     0&     0&     1&     0&     1&     0&     0&     0&     0&     1&     1&     1&     0&     1 \\
\end{array}
\right]. \\
\end{aligned}
\end{equation*}
The minimal DD $\mathcal{D}_{1}(\mathcal{C})$ of $\mathcal{C}$ is spanned by the rows of the above matrix.
We can perform Gauss elimination on the above matrix and obtain the generator matrix of $\mathcal{D}_{1}(\mathcal{C})$,
\begin{equation*}
\begin{aligned}
\left[ 
\begin{array}{cccccccccccccccc}
     1&     1&     0&     0&     0&     0&     1&     1&     1&     0&     1&     1&     0&     0&     1&     0\\
     0&     0&     1&     1&     0&     1&     0&     1&     1&     1&     1&     0&     0&     0&     1&     0\\
     0&     0&     0&     0&     1&     0&     0&     1&     1&     0&     1&     0&     1&     1&     1&     1\\
\end{array}
\right]. \\
\end{aligned}
\end{equation*}
It shows that $\mathcal{D}_{1}(\mathcal{C})$ is a $(16, 3)$ code.
From Proposition \ref{prop:summation of mDD is cDD} and Example \ref{ex:ex_2}, $\mathcal{D}_{1}(\mathcal{C})$ is a subcode of the $(16, 5)$ Hadamard code whose minimal Hamming distance is $8$. Thus, its minimal Hamming distance is at least $8$. In fact, its minimal Hamming distance is exactly $8$.

Consider the codeword $\bm{a} = $ $[$$1$ $0$ $1$ $0$ $0$ $0$ $1$ $0$ $1$ $1$ $1$ $0$ $0$ $0$ $0$ $0$$]$ in $\mathcal{C}$ and its cyclic shift $\bm{a}^{(1)} = $ $[$$1$ $1$ $0$ $0$ $0$ $1$ $0$ $1$ $1$ $1$ $0$ $0$ $0$ $0$ $0$ $0$$]$. The derivative of $\bm{a}$ in $\alpha$ is $[$$0$ $0$ $0$ $1$ $1$ $0$ $1$ $0$ $1$ $1$ $1$ $1$ $0$ $0$ $0$ $1$$]$.
The derivative of $\bm{a}$ in $\alpha^0$ is  $[$$0$ $0$ $1$ $1$ $0$ $1$ $0$ $1$ $1$ $1$ $1$ $0$ $0$ $0$ $1$ $0$$]$ which is a cyclic shift of the derivative of $\bm{a}$ in $\alpha$.
\end{example}

\subsection{Decoding based on decodings of minimal DDs}
According to Theorem \ref{thm:equivalent} and Corollary \ref{coro:coro_1}, we can perform derivative decoding on $\mathcal{C}$ based on the decodings of $\mathcal{D}_1(\mathcal{C})$ with cyclic shiftings.

Consider transmitting a codeword $\bm{a} = [A(\alpha^i), i\in I]$ over a BMS channel, and the recieved vector is $\bm{y}$. 
Let $\bm{L}$ denote the corresponding LLR vector. 
For the direction $\beta$, we denote the vector $[\Delta_{\beta}A(\alpha^i), i\in I]$ by $\bm{a}^{\beta}$.
From Corollary \ref{coro:coro_1}, the $b$-cyclic shift of $\bm{a}^{\beta}$, denoted by $\bm{a}^{\beta, (b)}$, is equal to the derivative of $\bm{a}^{(b)}$ in $\alpha^0$.
We denote the $b$-cyclic shift of $\bm{L}$ as $\bm{L}^{(b)}$.
Then we calculate the LLR vector associated with $\bm{a}^{\beta, (b)}$ as
\begin{equation}\label{eq:llr dd 2}
    \bm{L}^{\beta, (b)}  \triangleq [L^{{}\beta, (b)}_i, i \in I],
\end{equation}
where
\begin{equation}\label{eq:llr dd bit 2}
L^{\beta, (b)}_i = 2 \tanh ^{-1}\Big(\tanh(\frac{L^{(b)}_i}{2})\tanh(\frac{L^{(b)}_{j}}{2})\Big),
\end{equation}
where $j$ satisifies $\alpha^j = \alpha^i+1$.
Now we can treat $\bm{L}^{\beta, (b)}$ as a LLR vector associated with an codeword in $\mathcal{D}_1(\mathcal{C})$.
Denote a soft-decision decoder for $\mathcal{D}_1(\mathcal{C})$ by $\texttt{decoderDD}$.
The estimate of $\bm{a}^{\beta, (b)}$ is given by ${\hat{\bm{a}}^{\beta, (b)}} = \texttt{decoderDD}(\bm{L}^{\beta, (b)})$.
According to (\ref{eq:llr dd bit 2}),
the ``soft vote'' for $L_i^{(b)}$ from the direction $\beta$ is $\widetilde{L}_i^{\beta, (b)} = (1-2\hat{a}^{\beta, (b)}_i)L^{(b)}_{j}$.
And the ``soft vote''  for $\bm{L}^{(b)}$ from the direction $\beta$ is given by
\begin{equation*}
\widetilde{\bm{L}}^{\beta, (b)} = \texttt{getVote}(\bm{L}^{(b)}, \bm{a}^{\beta, (b)}, \alpha^0).
\end{equation*}
Cyclicly shift it $b$ places to the right and obtain the ``soft vote'' for $\bm{L}$ from the direction $\beta$, i.e. $\widetilde{\bm{L}}^{\beta}$.
The remaining steps mimic to Algorithm \ref{alg:DD} in  Section \ref{sec:decoding algorithm}.
We provide the pseudo code in Algorithm \ref{alg:DD2}.

\begin{algorithm}
\caption{Derivative Decoding Based on Minimal Derivative Descendants}\label{alg:DD2}
\textbf{Input:} The LLR vector $\bm{L}$; the maximum iteration number $N_{\emph{max}}$; a collection of directions $B$;  the parity check matrix $\mathbf{H}$

\textbf{Output:} The decoded codeword: $\hat{\bm{a}}$

\begin{algorithmic}[1]
\For {$t=1,2,\dots,N_{\emph{max}}$} 

\For {$b = 1, 2, \dots, n$}

\State $\bm{L}^{(b)} \gets \bm{L}^{(b-1)}$ 
\Comment{Take $\bm{L}$ as $\bm{L}^{(0)}$}

\If {$\alpha^b \in B$}

\State $\bm{L}^{\beta, (b)} \gets \texttt{derivativeLLR}(\bm{L}^{(b)}, 1)$

\State $\hat{\bm{a}}^{\beta, (b)} \gets \texttt{decoderDD}(\bm{L}^{\beta, (b)})$

\State $\widetilde{\bm{L}}^{\beta, (b)} \gets \texttt{getVote}(\hat{\bm{a}}^{\beta, (b)}, \bm{L}^{b}, 1)$

\State $\widetilde{\bm{L}}^{\beta} \gets \widetilde{\bm{L}}^{\beta, (b)} $ 
\Comment{Cyclicly shift $\widetilde{\bm{L}}^{\beta, (b)}$ $b$ places to the right}

\EndIf

\EndFor

\State $\bm{L} \gets \frac{1}{|B|}\sum_{\beta\in B}\widetilde{\bm{L}}^{\beta}$
\Comment{Here, $\sum$ denotes the component-wise summation}

\State $\hat{a}_i \gets \mathbbm{1}[L_i<0]$ for all $i\in I$

\If {$\mathbf{H}\bm{a}^{\text{T}} = \bm{0}$}

\State \textbf{Break}

\EndIf

\EndFor

\State \textbf{return} $\hat{\bm{a}}$
\end{algorithmic}

\end{algorithm}

\begin{remark}
The advantage of calculating $\bm{L}^{\beta, (b)}$ rather than $\bm{L}^{\beta}$ is that we can treat $\bm{L}^{\beta, (b)}$ as the LLR vector associated with the direction $\alpha^0$ for all $\beta \in \mathbb{F}_{2^m}$.
It allows us to keep using \texttt{derivative}, \texttt{decoderDD} and \texttt{getVote} for the direction $\alpha^0$.
\end{remark}

\begin{remark}\label{remark4}
For any $A(z) \in \mathcal{C}$, there is $\Delta_{1}A(\alpha^i) = \Delta_{1}A(\alpha^i+1)$.
Denote a subset of $\mathbb{F}_{2^m}$ by $T$ such that $T \cup (1+T) = \mathbb{F}_{2^m}$. 
For any codeword $\Delta_{1}A(z) \in \mathcal{D}_{1}(\mathcal{C})$, there is $[\Delta_{1}A(\alpha^i), \alpha^i \in T] = [\Delta_{1}A(\alpha^i), \alpha^i \in 1+T]$.
Besides, from (\ref{eq:llr dd bit 2}), there is $[L^{\beta, (b)}_{i}, \alpha^i \in 1+T] = [L^{\beta, (b)}_{i}, \alpha^i \in 1+T]$.
It can be used to simplify the decoding for minimal DDs in Algorithm \ref{alg:DD2}.
\end{remark}

\subsection{Derivative decoding based on OSD}
We propose to perform derivative decoding based on OSD.
Suppose $A_1(z)$, $A_2(z)$, ..., $A_{k_{\mathcal{D}}}(z)$ forms a basis of $\mathcal{D}_1(\mathcal{C})$. We can perform OSD based on the generator matrix given by this basis,
\begin{equation*}
\begin{aligned}
\mathbf{G}_{\mathcal{D}} & = 
\left[
\begin{array}{cccc}
A_1(\alpha^{\infty}) & A_1(\alpha^{0}) & ... & A_1(\alpha^{n-1})\\
A_2(\alpha^{\infty}) & A_2(\alpha^{0}) & ... & A_2(\alpha^{n-1})\\
... & ... & ... & ...\\
A_{k_{\mathcal{D}}}(\alpha^{\infty}) & A_{k_{\mathcal{D}}}(\alpha^{0}) & ... & A_{k_{\mathcal{D}}}(\alpha^{n-1})\\ 
\end{array}
\right] \\
& \triangleq 
\begin{array}{cccc}
[\bm{g}_{\alpha^\infty} & \bm{g}_{\alpha^0} & ... & \bm{g}_{\alpha^{n-1}}].
\end{array}
\end{aligned}
\end{equation*}
In fact, from Remark \ref{remark4}, we can implement the OSD decoder with the matrix $[\bm{g}_{\alpha^i}, \alpha^i \in T]$ and only take $[L^{\beta, (b)}_{i}, \alpha^i \in T]$ as the input when decoding $\bm{L}^{\beta, (b)}$ in Algorithm \ref{alg:DD2}.
We denote the derivative decoding based on OSD with by DD-OSD.
In particular, we focus on the derivative decoding based on OSD$(1)$ and denote it by DD-OSD$(1)$.

\begin{table}

\centering
\caption{Code Parameters of Extended BCH Codes and their descendants}
\begin{tabular}{cccccc}
   \toprule
   $n_{\mathcal{C}}$ & $k_{\mathcal{C}}$ & $d_{\mathcal{C}, \textbf{BCH}}$ &$k_{\mathcal{D}}$ & $d_{\mathcal{D}, \textbf{BCH}}$ & $k_{\mathcal{D}_1}$  \\
   \midrule

128 & 36 & 32 & 22 & 48 & 14 \\
256 & 37 & 92 & 25 & 96 & 16 \\
256 & 79 & 56 & 45 & 64 & 31 \\

   \bottomrule
\end{tabular}
\label{tb: eBCH codes}
\end{table}

\begin{figure*}[htbp]
\subfigure[$(128, 36)$ eBCH]{%
\begin{minipage}[t]{0.33\linewidth}
\flushleft 
\label{fig:data128_36}%
\includegraphics[width=1.0\textwidth]{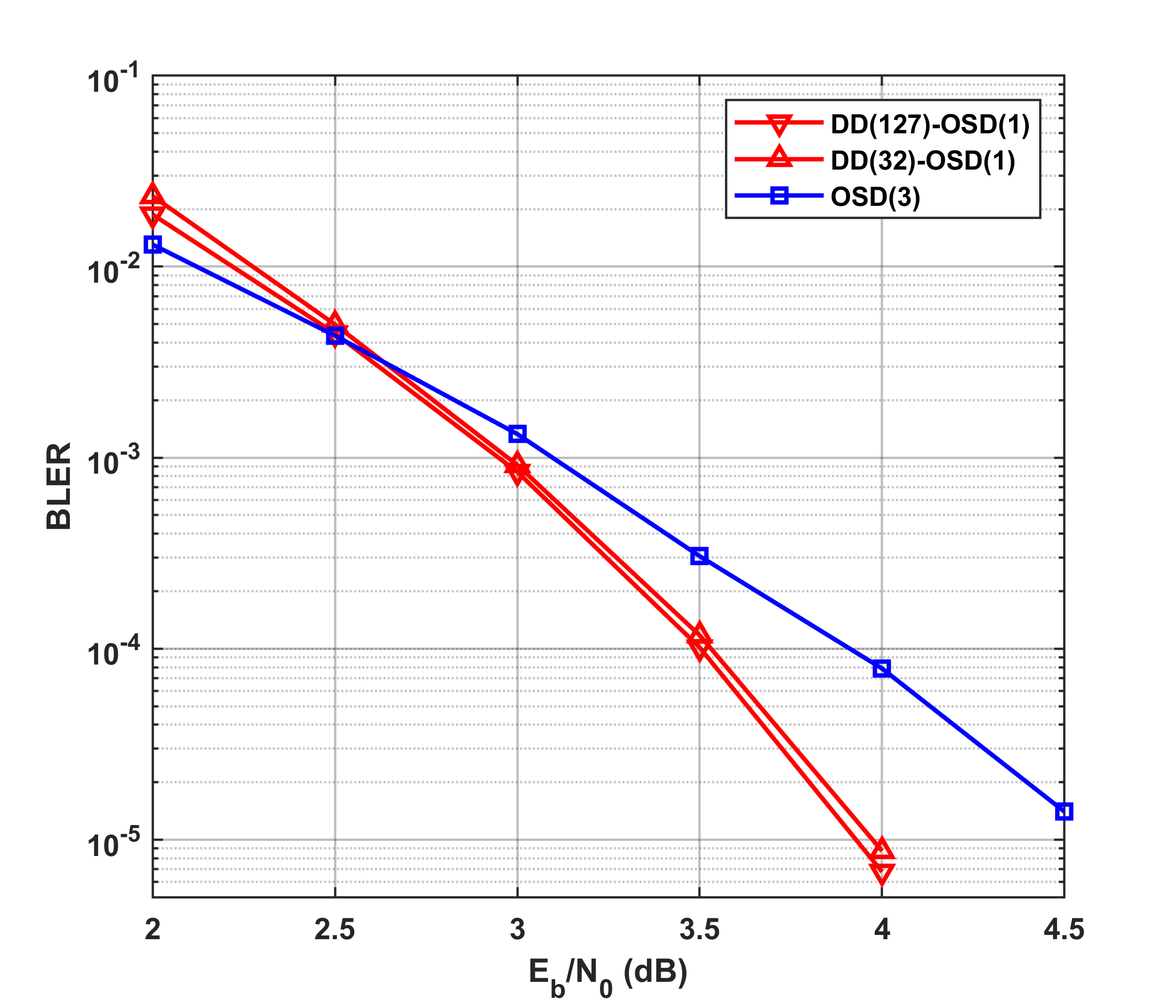}
\end{minipage}
}
\subfigure[$(256, 37)$ eBCH]{%
\begin{minipage}[t]{0.33\linewidth}
\flushleft 
\label{fig:data256_37}%
\includegraphics[width=1.0\textwidth]{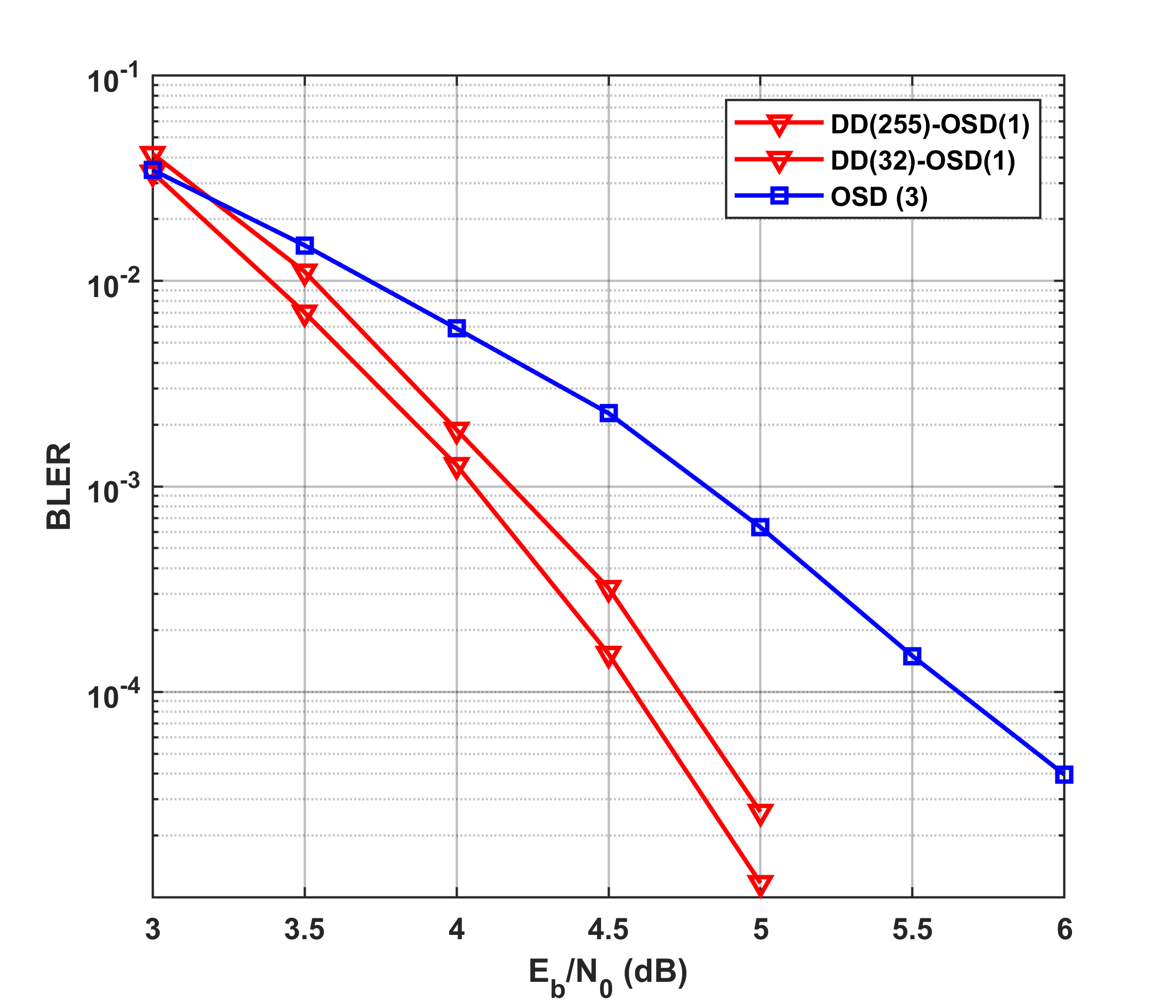}
\end{minipage}
}
\subfigure[$(256, 79)$ eBCH]{%
\begin{minipage}[t]{0.33\linewidth}
\flushleft 
\label{fig:data256_79}%
\includegraphics[width=1.0\textwidth]{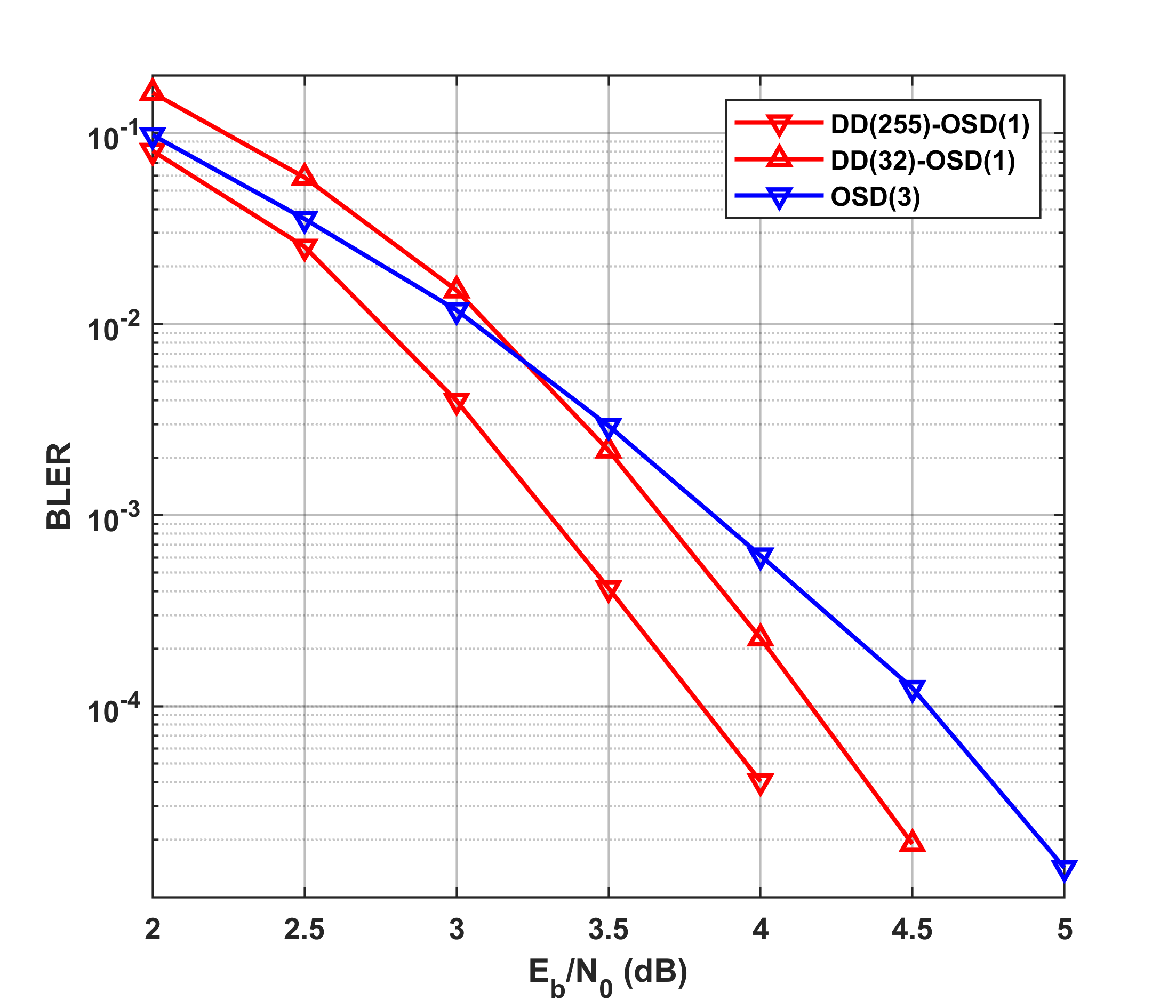}
\end{minipage}
}
\caption{Performance of decoding eBCH codes using DD-OSD$(1)$ and OSD$(3)$.}
\label{fig:DD-OSD eBCH}
\end{figure*}

Consider the $(128, 36)$, $(256, 37)$, and $(256, 79)$ eBCH codes.
We investigate their cyclic DDs and minimal DDs, and list the code paramters in Table \ref{tb: eBCH codes}.
In the top line of Table \ref{tb: eBCH codes}, $n_{\mathcal{C}}$ and $k_{\mathcal{C}}$ denote the code length and the code dimension of the eBCH codes, respectively; $k_{\mathcal{D}}$ and $k_{\mathcal{D}_1}$ denote the code dimension of the cyclic DDs and minimal DDs, respectively; $d_{\mathcal{C},\textbf{BCH}}$ and $d_{\mathcal{D},\textbf{BCH}}$ denote the BCH bounded distance of these eBCH codes and their cyclic DDs, respectively.

In general, DD-OSD$(1)$ outperforms OSD$(3)$ for decoding extended cyclic codes with moderate codelength at high SNR regions.
For derivative decoding the $(n_{\mathcal{C}}, k_{\mathcal{C}})$ eBCH code in Table \ref{tb: eBCH codes}, we take $B$ as a collection of all the nonzero elements in the corresponding splitting field and denote the procedure by DD($|B|$)-OSD($1$). From Remark \ref{remark4}, we can decode its minimal DD as a $(n_{\mathcal{C}}/2, k_{\mathcal{D}_1}/2)$ code with minimal Hamming distance $d_{\mathcal{D},\textbf{BCH}}/2$.
In addition, we perform derivative decoding in $|B| = 32$ directions at random, and denote the procedure by DD$(32)$-OSD$(1)$. 
The maximum iteration number $N_{\emph{DD,max}}$ is $4$ in all the cases.
We compare with the OSD$(3)$ and provide the simulation results over AWGN channels in Fig. \ref{fig:DD-OSD eBCH}.

Following the complexity analysis in \cite{fossorier1995soft}, we investigate the number of floating point operations of OSD of DD-OSD$(1)$. 
It consumes $n_{\mathcal{D}}\text{log}_2(n_{\mathcal{D}})+k_{\mathcal{D}_1}(n_{\mathcal{D}}-k_{\mathcal{D}_1})$ floating point operations to decode the $(n_{\mathcal{D}},k_{\mathcal{D}_1})$ minimal DD using the OSD$(1)$.
From Proposition \ref{prop:complexity}, it consumes $|B|5n + |B|(n_{\mathcal{D}}\text{log}_2\Big(n_{\mathcal{D}})+k_{\mathcal{D}_1}(n_{\mathcal{D}}-k_{\mathcal{D}_1})\Big)$ floating point operations to perform DD($|B|$)-OSD$(1)$ on the $(n,k)$ eBCH code per iteration.

Consider decoding the $(256, 79)$ eBCH code at the $E_b/N_0$ of $4.0$ dB. The average iteration number of DD($255$)-OSD($1$) is $1.02$ and that of DD$(32)$-OSD$(1)$ is $1.03$. As a result, the average cost of DD$(255)$-OSD$(1)$ is $1.02*255*1280+1.02*(255*8 + 255*176*79)= 3,951,439$ floating point operations and that of DD$(32)$-OSD$(1)$ is $1.03*32*1280+1.03*(32*8 + 32*176*79)= 500,728$ floating point operations. For comparision, the cost of OSD$(3)$ is $(\tbinom{79}{3}+\tbinom{79}{2}+\tbinom{79}{1})*176+256*8 = 14,476,112$ floating point operations.

\section{Conclusion}\label{sec:conclusion}
This paper introduces cyclic DDs and minimal DDs for extended cyclic codes and investigates their properties. These properties allow us to decode extended cyclic codes with soft-decision. Besides, it works for cyclic codes of
length of $2^{m} - 1$ as well according to Remark \ref{remark_1}. Simulation results verify that they perform very well for some eBCH codes over AWGN channels. 

{\appendices

\section{Proof of Proposition 4}

The dimension of $\mathcal{A}(\mathcal{C})$ satisfies
$$k_{\mathcal{D}} = |S_{\mathcal{A}}| \leq \sum_{i=0}^{\texttt{deg}(S_{\mathcal{A}})}\binom{m}{i}.$$
From Proposition \ref{prop:ascendant DS}, for any $s \in S_{\mathcal{A}}$, $P(s) \subseteq S_{\mathcal{C}}$.
From (\ref{eq:Ds}), $\texttt{deg}\Big(P(s)\Big) = \texttt{wt}(\overline{s})-1$.
Then
$$\texttt{wt}(\overline{s}) = \texttt{deg}\Big(P(s)\Big) + 1 \leq \texttt{deg}(S_{\mathcal{C}}) + 1.$$
This leads $\texttt{deg}(S_{\mathcal{A}})\leq\texttt{deg}(S_{\mathcal{C}})+1$.
As a result,
$$k_{\mathcal{A}} \leq \sum_{i=0}^{\texttt{deg}(S_{\mathcal{A}})}\binom{m}{i} \leq \sum_{i=0}^{\texttt{deg}(S_{\mathcal{C}})+1}\binom{m}{i}.$$

\section{Proof of Proposition 5}

Let $\mathcal{D}\Big(\mathcal{A}(\mathcal{C})\Big)$ denote the cyclic DD of $\mathcal{A}(\mathcal{C})$ with minimum Hamming distance $d_{\mathcal{D}(\mathcal{A})}$.
From Proposition \ref{prop:distance of dd}, we have $d_{\mathcal{D}(\mathcal{A})} \leq 2d_{\mathcal{A}}$.
From Definition \ref{def:derivative ascendant}, we have $\mathcal{D}\Big(\mathcal{A}(\mathcal{C})\Big) \subseteq \mathcal{C}$
which indicates $d_{\mathcal{D}(\mathcal{A})}\geq d$.
As a result, $d_{\mathcal{A}} \geq d/2$.

\section{Proof of Equivalence for RM Codes}
To begin, we recap the equivalence between representing RM codes by $m$-variate polynomials and representing RM codes by MS polynomials.
The exponent set of RM$(r,m)$ is $S = \{s:0\leq\texttt{wt}(\overline{s})\leq r\}$. 
For any $A(z) \in $RM$(r,m)$, we can write it as
\begin{equation*}
A(z) = \sum_{s\in S}A_sz^s.
\end{equation*}
Note that we can write $z$ as $z=\sum_{i=0}^{m-1}z_i\alpha^i$ where $z_i \in \mathbb{F}_2$ and we can write $s$ as $s=\sum_{j=0}^{m-1}s_j2^j$ where $s_j\in\{0, 1\}$.
Then 
\begin{equation*}
\begin{aligned}
A(z) & = \sum_{s\in S}A_s(\sum_{i=0}^{m-1}z_i\alpha^i)^{\sum_{j=0}^{m-1}s_j2^j} \\
& =  \sum_{s\in S}A_s\prod_{j=0}^{m-1}(\sum_{i=0}^{m-1}z_i\alpha^{i2^j})^{s_j}. \\
\end{aligned}
\end{equation*}
Please note that for any $s\in S$, $0\leq\texttt{wt}(\overline{s})\leq r$.
Thus 
\begin{equation*}
\begin{aligned}
A(z) = \sum_{V\subseteqq [m], |V|\leq r}u_V \prod_{i\in V}z^i,
\end{aligned}
\end{equation*}
where $u_V$ is a summation of $A_s\alpha^{i2^j}$ over a collection of $s,i,j$.
For $|V|=0$, $A(0)=u_{\emptyset}$, so $u_{\emptyset} \in \mathbb{F}_2$.
For $|V|=1$, $A(\alpha^i)=u_{\emptyset}+u_{\{i\}}$. 
Therefore, $u_{\{i\}} \in \mathbb{F}_{2}$ for all $i\in [m]$.
Note that for any $V\subseteqq [m]$, $A(\sum_{i\in V}\alpha^i) = \sum_{V'\subseteqq V}u_{V'} = \sum_{V'\subsetneqq V}u_{V'} + u_V$.
One can easily prove that $u_V \in \mathbb{F}_2$ for all $V\in [m]$ and $|V|\leq r$ by induction.
As a result, we can treat $A(z)$ as a $m$-variate Boolean polynomial $A(z_0, z_1, ..., z_{m-1})$ with degree no larger than $r$. 
Moreover, $[A(\alpha^i), i\in I]$ is equal to $[A(z_0, z_1, ..., z_{m-1}), [z_0, z_1, .., z_{m-1}] \in \mathbb{F}_{2}^m]$.
Note that the dimension of RM$(r,m)$ is $\sum_{i=0}^{r}\tbinom{m}{i}$. 
We conclude that RM$(r,m)$ consists of the evaluation vectors of all the $m$-variate Boolean polynomials with degree no larger than $r$ over $\mathbb{F}_{2}^m$.

First consider the derivative of $A(z)$ in the direction $\alpha^0$,
\begin{equation}
\begin{aligned}
\Delta_{1}A(z) & = A(z+1)-A(z) \\
& = A\Big(\sum_{i=1}^{m-1}z_i\alpha^i+(z_0+1)\Big)-A(\sum_{i=0}^{m-1}z_i\alpha^i) \\
& = \sum_{V\subseteqq [m], |V|\leq r, \atop 0\in V}u_{V}(z_0+1)\prod_{i\in V/\{0\}}z_i + \\
& \sum_{V\subseteqq [m], |V|\leq r, \atop 0\notin V}u_{V}\prod_{i\in V}z_i - \sum_{V\subseteqq [m], |V|\leq r}u_{V}\prod_{i\in V}z_i \\
& = \sum_{V\subseteqq [m], |V|\leq r, \atop 0\in V}u_{V}\prod_{i\in V/\{0\}}z_i.
\end{aligned}
\end{equation}
It is a $(m-1)$-variate Boolean polynomial with degree no larger than $r-1$ and we denote it by $A'(z_1, z_2, ..., z_{m-1})$.
Denote $T_1=\{\sum_{j=1}^{m-1}z_j\alpha^j: z_j \in \mathbb{F}_2 \text{ for } j = 1,2,...,m-1\}$ such that $T_1 \cup (1 + T_1) = \mathbb{F}_{2^m}$. The vector 
\begin{equation}
\begin{aligned}
& [\Delta_{1}A(\alpha^i), \alpha^i \in T] = \\ 
& [A'(z_1,z_2,...,z_{m-1}), [z_1,z_2,...,z_{m-1}] \in (\mathbb{F}_2)^{m-1}]
\end{aligned}
\end{equation}
is a codeword in RM$(r-1,m-1)$.

Now consider the derivative of $A(z)$ in the direction $\beta = \alpha^b$.
Take $T = \{\beta \alpha^i: \alpha^i \in T_1\}$.
We have $\beta+T = \{\beta \alpha^i: \alpha^i \in 1+T_1\}$ and $T\cup (\beta+T) = \mathbb{F}_{2^m}$.
According to Corrollary \ref{coro:coro_1}, the $b$-cyclic shift of $\Delta_{\beta}A(z)$ is equal to $\Delta_{1}A(\beta z)$.
It indicates the evaluation of $\Delta_{\beta}A(z)$ at $\beta\alpha^i$ is equal to the evaluation of $\Delta_{1}A(\beta z)$ at $\alpha^i$.
As a result,  
\begin{equation*}
\begin{aligned}
&[\Delta_{\beta}A(\alpha^i), \alpha^i \in T] = [\Delta_{1}A(\beta\alpha^i), \alpha^i \in T_1\}]
\end{aligned}
\end{equation*}
is also a codeword in RM$(r-1,m-1)$.

\bibliographystyle{IEEEtran}
\bibliography{CyclicCodes.bib}

\end{document}